\newcommand{\ul}{\underline}
\newcommand{\BEAS}{\begin{eqnarray*}}
\newcommand{\EEAS}{\end{eqnarray*}}
\newcommand{\BEA}{\begin{eqnarray}}
\newcommand{\EEA}{\end{eqnarray}}
\newcommand{\BIT}{\begin{itemize}}
\newcommand{\EIT}{\end{itemize}}
\newcommand{\BNUM}{\begin{enumerate}}
\newcommand{\ENUM}{\end{enumerate}}
\newcommand{\mZ}{\mathbf{Z}}
\newtheorem{theorem}{Theorem}
\begin{document}

\title{Regression with Partially Observed Ranks on a Covariate: \\
Distribution-Guided Scores for Ranks}

\date{}

\date{}
\author{Yuneung Kim, Johan Lim, Young-Geun Choi\\
and \\
Sujung Choi, and Do Hwan Park\thanks{
Yuneung Kim, Johan Lim and Young-Geun Choi 
are with the Department of Statistics, Seoul National University, Seoul, 151-747, Korea. 
 Sujung Choi is with the School of Business Administration, Soongsil University, Seoul, 156-743, Korea.
 Dohwan Park is with Department of Mathematics and Statistics, University of Maryland at Baltimore 
 County, Baltimore, MD, 21250, USA. 
 All correspondence  are to Johan Lim (E-mail: \texttt{johanlim@snu.ac.kr})}}
\maketitle
\begin{abstract}
\noindent This work is motivated by a hand-collected data set from one of the largest
Internet portals in Korea. This data set records the top 30 most frequently
discussed stocks on its on-line message board. The frequencies are 
considered to measure the attention paid by investors to individual stocks. The empirical goal
of the data analysis is to investigate the effect of this attention on trading behavior.
For this purpose, we regress the (next day) returns and the (partially) observed ranks of
frequencies. In the regression, the ranks are transformed into scores, for which purpose the
 identity or linear scores are commonly used.
In this paper, we propose a new class of scores 
(a score function) that is based on the moments of order statistics of a pre-decided random 
variable. The new score function, denoted by D-rank, is shown to be asymptotically optimal to 
maximize the correlation between 
the response and score, when the pre-decided random variable and true covariate
are in the same location-scale family. In addition, the least-squares estimator using the D-rank 
consistently estimates the true correlation between
the response and the covariate, and asymptotically approaches the normal distribution.
We additionally propose a procedure for diagnosing a given score function (equivalently,
the pre-decided random variable Z) and selecting
one that is better suited to the data. 
We numerically demonstrate the advantage of using a correctly specified score function 
over that of the identity scores (or other misspecified scores) 
in estimating the correlation coefficient. Finally, we apply our proposal
to test the effects of investors' attention on their returns using the motivating 
data set. 

\vskip0.5cm
\noindent{\bf Keywords:} Concomitant variable; investors' attention; linear 
regression; moments of order statistics; optimal scaling; partially observed ranks 
\end{abstract}

\baselineskip 24pt

\section{Introduction}

This paper is motivated by a hand-collected data set from \texttt{Daum.net}, the 
2nd largest Internet portal in Korea. The \texttt{Daum.net} portal
 offers an on-line stock message board where investors can
 freely discuss specific stocks in which they might be
 interested. This portal also reports
a ranked list of the top 30 stocks that are most frequently discussed by users on a daily basis. 
The data set was collected by the authors during the 537 trading days 
from October 4th, 2010, to November 23rd, 2012. Along with the rank data, we also 
collected financial data regarding individual companies from FnGuide 
(\texttt{http://www.fnguide.com}).  These additional data include stock-day trading volumes 
classified in terms of different types of investors, stock prices, stock returns, and so on.

The purpose of analyzing the collected data is to investigate the shifts in stock returns caused by variations in
 investor attention. In finance, researchers are often interested in
 determining the motivations that drive buying and selling decisions in stock markets. It is commonly assumed that investors efficiently process
 relevant information in a timely manner, but in reality, it is nearly impossible to be efficient
because of information overload. In particular, individual investors are often less sophisticated 
than are institutional investors and have a limited ability to
process all relevant information. For this reason, individual investors may pay attention only to a limited amount 
of information, perhaps that which is relatively easy to access. The phenomenon of limited attention is a well-documented cognitive bias
 in the psychological literature
 \citep{Kahneman:1973,Camerer:2003}. This phenomenon affects the information-processing capacities of investors and thus may affect asset prices on the financial market. 
To empirically prove the 
effect of investor attention on stock returns, we regress the observed stock returns with respect to the partially observed ranks.

Regression on a (partially observed) rank covariate has not 
previously been extensively studied in the literature. A procedure 
that is commonly used in practice to address rank covariates is to (i) regroup the ranks into only a few groups 
(if the number of ranks is high) and (ii) treat the regrouped ranks as an ordinal categorical variable.
Ordered categorical variables frequently arise in various applications and have been 
studied extensively in the literature. Score-based analysis is most commonly used for this purpose;
see \cite{Hajek:1968}, \citet{Hora:1984}, \citet{Kimeldorf:1992},
\citet{Zheng:2008}, \citet{Gertheiss:2014} and the references therein. Thus, this typical two-step procedure for addressing a rank covariate is equivalent to defining a score function for the ranks. However, as in the case of ordinal categorical 
variables, such a score-based approach suffers from an inherent drawback related to the choice of the score function; 
different choices of scores may lead to conflicting conclusions in the analysis 
\citep{Graubard:1987,Ivanova:2001,Senn:2007}. The recommendation 
for selecting the score function according to the literature is (i) to choose meaningful scores for 
the ordinal categorical variable based on domain knowledge of the data, (ii) to 
use equally spaced scores if scientifically plausible scores are not available (see \citet{Graubard:1987}),
 and (iii) to find a optimal scaling transformed scores that maximize the correlation with 
 the responses while preserving the assumed characteristics of the ordinal values\citep{Linting:2007,Costantini:2010,
 deLeeuw:2009,Mair:2010,Jacoby:2016}.

In this paper, we seek to provide an efficient tool for approach (i) described above, for the case in which some qualitative 
knowledge is available regarding the ranks or the ranking variable (the variable that is ranked). More specifically,
we propose a new set of score functions, denoted by D-rank, and study their use in linear regression.  
The proposed score function is based on the moments of order statistics (MOS) of a pre-decided random variable
$Z$. This score function has several interesting properties related with the regression model, if 
the pre-decided random variable is correctly specified as listed below. Here, the correct specification implies 
it is within the same location-scale family with the true (unobserved) covariate $X$. 
First, the D-rank is asymptotically
optimal in the sense that it maximizes the correlation between the response and score if 
the distribution of the D-rank is correctly specified. Second, 
the least-squares estimator using the D-rank consistently estimates the true correlation between
the response and the covariate and asymptotically approaches the normal distribution. Finally,
the residuals of the fitted regression allow us to diagnose the given score function (equivalently,
the pre-decided random variable $X$) and to provide a tool for selecting a score function 
that is better suited to the data.

The remainder of this paper is organized as follows. In Section 2, we study the 
properties of the proposed D-rank. In this section, we show that the proposed D-rank
is asymptotically optimal to maximize the correlation between the response and score. 
In addition, We also demonstrate the 
asymptotic equivalence between the proposed score function and the quantile 
function; the quantile function may provide a better illustration of the 
qualitative features of the score function. In Section 3, we apply 
the score function to estimate the regression coefficient of the linear model or, 
more precisely, to estimate the correlation coefficient between 
the response and the scoring variable $X$.  
We prove that the least-squares estimator using the D-rank 
consistently estimates the correlation coefficient and is asymptotically normally 
distributed. In addition, we discuss the procedure for selecting an appropriate score function using 
the residuals.  
In Section 4, we numerically demonstrate that using the correctly specified score function 
significantly reduces the mean square error on the estimation of the correlation coefficient. 
In Section 5, we analyze the motivating data set to investigate the existence of the attention 
effect. Finally, in Section 6, we briefly summarize the paper and discuss 
the application of the proposed scores to regression using other auxiliary covariates.

\section{Distribution-Guided Scores for Ranks (D-rank)} 

We consider a simple regression model in which only partial ranks of a covariate are observed. Specifically, suppose that $\big\{ \big( Y_{i}, X_{i} \big), ~i=1,2,\ldots,n \big\}$ is the complete set of observations,
where $Y_{i}$ is the variable of primary interest and $X_{i}$ is the covariate related to $Y_i$. For example,
in our rank data from \texttt{Daum.net}, for $i=1,2,\ldots,n$, $Y_i$
is a relevant outcome such as earning rate or trading volume,
$X_i$ is the ``unobserved'' investors' attention on the $i$th company measured by the frequency of on-line discussions, and $R_i$ is the ``observed'' rank of $X_i$ among $X_1,X_2,\ldots,X_n$.  We 
make certain assumptions regarding the distributions of $X$ and $Y$. 
We assume that the linear model of the relationship between $X_i$ and $Y_i$ is 
\begin{equation}  \label{eqn:lm}
Y_i = \mu_Y + \rho \sigma_Y \frac{X_i - \mu_X}{\sigma_X} + \epsilon_i,
\end{equation}
where the $\epsilon_i$s are IID values from a distribution of mean $0$ and variance $\sigma_{\epsilon}^2$.
The objective of this paper is the estimation and inference of $\rho={\rm corr} \big(Y, X\big)$ (or
the regression coefficient between $Y$ and $X$) based on the observed data $\big\{ (Y_i, R_i), i=1,2,\ldots,n \big\}$. 
To do it, we aim to define a good score function $S(r)$ for the observed rank $r$, and 
consider the regression of $Y_{[r:n]}$ on $S(r)$, where $Y_{[r:n]}$ is the response $Y_i$ for $R_i=r$.

The D-rank, we propose in this paper, is a set of the MOS of pre-decided random variable $Z$, 
which we assume is in the same location-scale family of the true covariate $X$. To be specific, 
suppose that $Z_1,Z_2,\ldots,Z_n$ are independent and identically distributed (IID) 
copies of the random variable $Z$ and that $Z_{(r:n)}$ is the corresponding $r$th-order statistic for $r=1,2,\ldots,n$.
The D-rank defines the score of the rank $r$ as $S_n(r)=\alpha_{(r:n)}:={\rm E} \big(Z_{(r:n)} \big)$ for $r=1,2,\ldots,n$.

We first show that the D-rank maximizes the sample correlation between $Y_{[r:n]}$ and $\alpha_{(r:n)}$, $r=1,2,\ldots,n$, 
in asymptotic, among all increasing functions $S_n(r):\{1,2,\ldots,n\} \rightarrow \mathbb{R}$. 
Let  $\overline{S}_n(r)$ and $\overline{\alpha}_{(r:n)}$ be the standardized scores (of $S_n(r)$ 
and $\alpha_{(r:n)}$) to make $\sum_{r=1}^n \overline{S}_n(r) = \sum_{r=1}^n
\overline{\alpha}_{(r:n)}=0$ and $\sum_{r=1}^n \overline{S}^2_n(r) = \sum_{r=1}^n\overline{\alpha}^2_{(r:n)}=1$.
Let ${\bf S}_n$ and $\overline{\bf S}_n$ be the collection of all increasing functions $S_n(r)$ and $\overline{S}_n(r)$, 
respectively. 
 \begin{theorem}
Under the linear model (\ref{eqn:lm}), if $Z$ is in the location-scale family of $X$, the D-rank maximizes 
the limit of the sample correlation between $Y_{[r:n]}$ and $S_n(r)$ among $S_n(r) \in {\bf S}_n$: 
\begin{equation} \label{eqn:acorr} 
\lim_{n \rightarrow \infty} \frac{1}{n \cdot \widehat{\sigma}_Y} \sum_{r=1}^n \overline{S}_n(r) \big( Y_{[r:n]} - \overline{Y}_n \big),
\end{equation} 
 where $\widehat{\sigma}^2_Y = \frac{1}{n}\sum_{r=1}^n  \big( Y_{[r:n]} - \overline{Y}_n \big)^2$.
 \end{theorem}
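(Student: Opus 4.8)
The plan is to replace the random objective (\ref{eqn:acorr}) by a deterministic population functional and then solve a variational maximization over limiting score functions. First I would use the linear model (\ref{eqn:lm}) to write the concomitant explicitly. Because the $\epsilon_i$ are independent of the $X_i$, and hence of the ranks, sorting the pairs by the covariate yields
\[
Y_{[r:n]} = \mu_Y + \rho\,\sigma_Y\,\frac{X_{(r:n)} - \mu_X}{\sigma_X} + \epsilon_{[r:n]},
\]
where $X_{(r:n)}$ is the $r$th order statistic of $X_1,\ldots,X_n$ and $\epsilon_{[r:n]}$ is its attached error. Substituting this into (\ref{eqn:acorr}) splits the objective into a signal term driven by the $X_{(r:n)}$ and a noise term proportional to $\sum_{r=1}^n \overline{S}_n(r)\,\epsilon_{[r:n]}$.

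Next I would show the noise term is asymptotically negligible and identify the limit of the signal term. Conditional on the ranks the $\epsilon_{[r:n]}$ are the IID errors in an order that is independent of the deterministic scores, so the weighted noise sum has mean zero and variance of order $1/n$; it therefore vanishes in probability, uniformly over $\overline{\bf S}_n$. For the signal term I would invoke the convergence of order statistics to quantiles, $X_{(\lceil nu\rceil:n)} \to Q_X(u) := F_X^{-1}(u)$, together with a Riemann-sum argument, to identify the limit of (\ref{eqn:acorr}) with the linear functional
\[
L(s) := \frac{\rho}{\sigma_X}\int_0^1 s(u)\,\bigl(Q_X(u) - \mu_X\bigr)\,du,
\]
where $s(u)$ is the limiting standardized score, normalized by $\int_0^1 s = 0$ and $\int_0^1 s^2 = 1$, and nondecreasing. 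This is the step I expect to be the main obstacle: justifying the interchange of the limit with the supremum over $\overline{\bf S}_n$, and controlling the tail contributions of the extreme order statistics, requires moment and continuity assumptions on $X$ together with a uniform (equicontinuity-type) treatment of the admissible score class.

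Finally I would solve $\max_s L(s)$ over nondecreasing $s$ with $\int_0^1 s = 0$, $\int_0^1 s^2 = 1$. Taking $\rho > 0$ (the case $\rho < 0$ follows by the reflection $r \mapsto n+1-r$), Cauchy--Schwarz gives $\int_0^1 s(u)\,(Q_X(u)-\mu_X)\,du \le \sigma_X$, since $\int_0^1 (Q_X(u)-\mu_X)^2\,du = \sigma_X^2$, with equality precisely when $s(u) \propto Q_X(u) - \mu_X$. As $Q_X$ is nondecreasing this maximizer is admissible, so the maximal value of (\ref{eqn:acorr}) is $\rho$ and the optimal limiting score is the standardized quantile function $(Q_X(u)-\mu_X)/\sigma_X$. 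It then remains to verify that the D-rank attains it: since $Z$ lies in the location-scale family of $X$, I can write $Z \stackrel{d}{=} a + b\,(X-\mu_X)/\sigma_X$ with $b > 0$, whence $\alpha_{(r:n)} = {\rm E}(Z_{(r:n)}) = a + (b/\sigma_X)\bigl({\rm E}(X_{(r:n)}) - \mu_X\bigr)$. Applying ${\rm E}(X_{(\lceil nu\rceil:n)}) \to Q_X(u)$ once more shows that the standardized D-rank $\overline{\alpha}_{(r:n)}$ converges to $(Q_X(u)-\mu_X)/\sigma_X$, the maximizer, so the D-rank is asymptotically optimal.
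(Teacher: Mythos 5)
Your proof follows essentially the same route as the paper's: the same decomposition of the sample correlation into a signal term, a concomitant-noise term, and a centering term, with the latter two shown to vanish in probability and the signal term maximized by Cauchy--Schwarz. The only substantive difference is that the paper applies Cauchy--Schwarz directly to the discrete inner product $\sum_r \overline{S}_n(r)\,\alpha_{(r:n)}$, so the maximizer is identified as the D-rank itself, whereas you pass to the $L^2[0,1]$ limit with the quantile function and then verify separately that the standardized D-rank converges to the maximizer $(Q_X(u)-\mu_X)/\sigma_X$; both arguments share the same unaddressed caveats (the sign of $\rho$ and uniformity over the admissible score class), which you at least flag explicitly.
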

\noindent The proof of Theorem 1 is followed in Appendix.

Theorem 1 shows the asymptotic optimality of the D-rank for the regression in
view of optimal scaling in the literature. The optimal scaling finds optimally transformed scores
that explain mostly well the assumed statistical model. It arises in various contexts
including Gifi classification of non-linear multivariate analysis\citep{deLeeuw:2009},
the aspect (correlational and non-cprrelational aspects) of multivariable\citep{Mair:2010}, 
and non-linear principal component analysis\citep{Linting:2007,Costantini:2010}. 
Here, we adopt the idea of the optimal scaling in \citet{Jacoby:2016}, and find the transformation to 
maximize the correlation 
between the response and transformed scores. Theorem 1 above shows that the D-rank maximizes 
the correlation in asymptotic, if pre-determined distribution for the D-rank is correctly specified.

The proposed score is closely related to the quantile of the underlying distribution of $Z$. Let 
$F_{Z} (z)$ for $z \in \mathbb{R}$ and $Q_Z(q)$ for $q \in [0,1]$ be 
the cumulative distribution function (CDF) and the quantile function (QF), respectively, of $Z$. 
In the estimation of $F_Z(z)$ for $\{Z_i, i=1,2,\ldots,n\}$, 
the $r$th-order statistic $Z_{(r:n)}$ is the $(r/n)\times 100$-th percentile point of the empirical CDF, 
and thus, its expected value is approximately equal to $Q_Z(r/n)$. More specifically, given $p_r=\frac{r}
{n+1}$, $q_r=1-p_r$, and $Q_r=Q_Z(p_r)$, we can write 
\begin{equation} \nonumber 
\alpha_{(r:n)}=Q_r+ \frac{p_r q_r}{2(n+2)} Q_r^{(2)}+O\big(\frac{1}{n^2} \big),
\end{equation}
where $Q_r^{(2)}=-{f_Z^{\prime} (Q_r)} \big/{\{ f_Z (Q_r) \}^3}$ and $f_Z(z)$ is the probability density 
function of $Z$, which is differentiable. We refer the reader to \citet[Section 4.6]{David:2003} 
for the details of the relationship between the MOS and the quantiles.

Consideration of the QF may provide a better understanding of the qualitative features of the proposed score 
function. Suppose we expect the score function $S_n(r)$ is convex in tail 
(for $r \ge [n c]$ for a constant $c$ close to $1$); in other words, 
$S_n (r+1) - S_n (r) \ge S_n(r) - S_n (r-1)$ for $r \ge [n c]$. 
From the equivalence between the MOS and quantiles,
it is known that the convexity of the scores $S_n(r)$ is approximately equal to that 
of the quantile function $Q_Z(p)$. Furthermore, 
the convexity of $Q_Z(p)$ for $p \ge c$  implies the following 
equivalent statements: (i) 
$F(z)$ is concave in $z$, (ii) $f^{\prime} (z) \le 0$ or (iii) $\log f (z)$ is decreasing in $z$, all for $z \ge Q_Z([nc])$.

\section{Simple Linear Regression}

In this section, we consider a simple regression model in which only partial ranks of a covariate are observed. Specifically, suppose that $\big\{ \big( Y_{i}, X_{i} \big), ~i=1,2,\ldots,n \big\}$ is the complete set of observations
from the linear model (\ref{eqn:lm}),
and $R_i$ is the rank of $X_i$ among $X_1,X_2,\ldots,X_n$. The rank $R_i$ 
of $X_i$ is indirectly measured by the frequency of on-line discussions of the $i$th company.

In this paper,
 we consider the case in which the ranks $R_i$ are partially observed 
in the sense that we observe only that
$U_{i}=R_i {\rm I}\big(R_{i} \le m \big) + m^{+} {\rm I} \big( R_{i}>m \big)$
rather than $R_i$, where $m^+$ is an arbitrary constant that is greater than $m$. Finally, 
the observations are 
\begin{equation} \nonumber
\big\{ (Y_i, U_i), ~i=1,2,\ldots,n \big\}.
\end{equation}
We let $Y_{[r:n]}=Y_{i} {\rm I} \big(R_{i}=r\big)$ 
for $r=1,2,\ldots,m$, and denote the above partially observed data by 
${\bf Y}_{[m]}$ for notational simplicity.

The objective of this section is to identify a good estimator of $\rho={\rm corr} \big(Y, X\big)$ (or
the regression coefficient between $Y$ and $X$) and to test $\mathcal{H}_0: \rho=0$ versus $\mathcal{H}_1: \rho\neq 0$ or $\rho >0$ using the observed data ${\bf Y}_{[m]}$.

\subsection{Least-Squares Estimator}

To estimate $\rho$, we recall assumptions regarding the distributions of $X$ and $Y$. 
We assume that the linear model of the relationship between $X_i$ and $Y_i$ is 
\begin{equation}  \nonumber
Y_i = \mu_Y + \rho \sigma_Y \frac{X_i - \mu_X}{\sigma_X} + \epsilon_i,
\end{equation}
where the $\epsilon_i$s are IID values from a distribution of mean $0$ and variance $\sigma_{\epsilon}^2$. 
By ordering on the $X_i$s, we have for $r=1,\ldots,n$
\begin{equation}
Y_{[r:n]} = \mu_Y + \rho \frac{\sigma_Y}{\sigma_X} \big( X_{(r:n)} -\mu_X \big) + \epsilon_{[r:n]}, \label{eqn:con-linear}
\end{equation}
where $\rho = {\rm corr} \big( Y, X \big)$ and 
\begin{eqnarray}
{\rm E} \big(Y_{[r:n]} \big) &=& \mu_Y + \rho \sigma_Y \alpha_{(r:n)}  \label{eqn:con-mean}\\
{\rm var} \big( Y_{[r:n]} \big) &=& \sigma_Y^2 \big( \rho^2 \beta_{(rr:n)} + 1 - \rho^2 \big) \nonumber \\
{\rm cov} \big(Y_{[r:n]},Y_{[s:n]}\big) &=& \rho^2\sigma_Y^2\beta_{(rs:n)},~r \ne s \nonumber
\end{eqnarray}
with 
\begin{equation}  \nonumber
\alpha_{(r:n)} = {\rm E} \bigg\{ \frac{ X_{(r:n)} - \mu_X } { \sigma_X} \bigg\} \quad \mbox{and } ~~
\beta_{(rs:n)} = {\rm Cov} \left(   \frac{ X_{(r:n)} - \mu_X } { \sigma_X},
\frac{ X_{(s:n)} - \mu_X } { \sigma_X} \right)
\end{equation}
for $r,s=1,2,\ldots,n$ \citep{David:1974,David:2003}.

We are motivated by the identities (\ref{eqn:con-linear}) and (\ref{eqn:con-mean}) given above and
propose the least-squares estimator 
\begin{equation} \label{eqn:lse}
\widehat{\rho} \big(s \big) \equiv \frac{1}{\widehat{\sigma}_Y} \cdot
 \frac{\sum_{r=1}^{[ns]} \alpha_{(r:n)}\big\{  Y_{[r:n]} -\widehat{\mu}_Y  \big\}}
{\sum_{r=1}^{[ns]} \alpha_{(r:n)}^2 }
\end{equation}
as an estimator of $\rho$ with $s=m/n$, where, $\widehat{\mu}_Y=\sum_{i=1}^nY_i/n$ and $\widehat{\sigma}_Y^2=\sum_{i=1}^n(Y_i-\widehat{\mu}_Y)^2/n$
are the empirical estimators of the mean and variance, respectively, of $Y$.

We claim that, if 
$X$ is drawn from a location-scale family generated by $Z$, then the least-squares estimator 
$ \widehat{\rho} \big(s \big)$ with $s=m/n$ in (\ref{eqn:lse}), that is calculated
based on the partial observations ${\bf Y}_{[m]}$,
is consistent and asymptotically normally distributed with an appropriate scale, 
as shown in Theorem 2. Suppose that
\begin{equation} \nonumber
\Psi_n^{\rm I}(s) := \frac{1}{n} \sum_{r=1}^{[ns]} \alpha_{(r:n)}^2 \sigma_{(r:n)}^2, 
\Psi_n^{\rm II}(s):= \frac{1}{n} \sum_{r1=1}^{[ns]} \sum_{r2=1}^{[ns]} \alpha_{(r1:n)}
 \alpha_{(r2:n)}\beta_{(r1,r2:n)}^2,~ \mbox{and}~ \Phi_n(s):=  \frac{1}{n} \sum_{r=1}^{[ns]}
 \alpha_{(r:n)}^2,
\end{equation}
where $\sigma_{(r:n)}^2=
\sigma^2\big(X_{(r:n)} \big)$, and let $\Psi_{\infty}^{\rm I}(s)$, $\Psi_{\infty}^{\rm II}(s)$ and $\Phi_{\infty}(s)$ be the limits of 
 $\Psi_n^{\rm I}(s)$, $\Psi_n^{\rm II}(s)$ and $\Phi_n(s)$, respectively (under the assumption that they exist). 

\begin{theorem}
Under the assumption that $X$ is drawn from a distribution of a location-scale family with a finite variance,
the distribution of $\sqrt{n} \big( \widehat{\rho} (s) -\rho \big) $ converges to the normal distribution of
 mean $0$ and variance $\big\{\Psi_{\infty}^{\rm I}(s)/\sigma_Y^2+\rho^2\Psi_{\infty}^{\rm II}(s)\big\}/\Phi_{\infty}^2(s)$.
\end{theorem}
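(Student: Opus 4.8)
The plan is to linearize $\widehat{\rho}(s)$ about $\rho$, isolate a single weighted sum of concomitants as the leading stochastic term, and then combine a central limit theorem for that sum with Slutsky's theorem for the plug-in estimators $\widehat{\mu}_Y$ and $\widehat{\sigma}_Y$. First I would substitute the ordered model (\ref{eqn:con-linear}) into the estimator (\ref{eqn:lse}). Writing $\widetilde{X}_{(r:n)}=(X_{(r:n)}-\mu_X)/\sigma_X$ and centering $Y_{[r:n]}$ at its mean (\ref{eqn:con-mean}) gives $\eta_{[r:n]}:=Y_{[r:n]}-\mathrm{E}(Y_{[r:n]})=\rho\sigma_Y(\widetilde{X}_{(r:n)}-\alpha_{(r:n)})+\epsilon_{[r:n]}$, where $\epsilon_{[r:n]}$ is the noise attached to the unit of rank $r$. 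A short calculation then yields the decomposition
\begin{equation} \nonumber
\widehat{\rho}(s)-\rho=\rho\Big(\frac{\sigma_Y}{\widehat{\sigma}_Y}-1\Big)+\frac{(\mu_Y-\widehat{\mu}_Y)\,\frac{1}{n}\sum_{r=1}^{[ns]}\alpha_{(r:n)}+\frac{1}{n}\sum_{r=1}^{[ns]}\alpha_{(r:n)}\eta_{[r:n]}}{\widehat{\sigma}_Y\,\Phi_n(s)}.
\end{equation}

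Next I would record the consistency facts $\widehat{\mu}_Y\to_p\mu_Y$, $\widehat{\sigma}_Y\to_p\sigma_Y$ (laws of large numbers), and $\Phi_n(s)\to\Phi_\infty(s)$, the latter because $\alpha_{(r:n)}$ converges to the standardized quantile function so the Riemann-type sum converges under the assumed existence of the limit. Consequently, after multiplying through by $\sqrt{n}$, the denominator $\widehat{\sigma}_Y\Phi_n(s)$ may be replaced by $\sigma_Y\Phi_\infty(s)$ up to $o_p(1)$, and the problem reduces to the joint limiting behaviour of $T_n:=n^{-1/2}\sum_{r=1}^{[ns]}\alpha_{(r:n)}\eta_{[r:n]}$ together with $\sqrt{n}(\widehat{\mu}_Y-\mu_Y)$ and $\sqrt{n}(\widehat{\sigma}_Y-\sigma_Y)$.

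The crux is the central limit theorem for $T_n$. Because $\epsilon\perp X$, conditioning on the order statistics $\{X_{(r:n)}\}$ makes the attached errors $\epsilon_{[r:n]}$ i.i.d.\ copies of $\epsilon$, so the law of total variance splits $\mathrm{var}(T_n)$ into a within-component governed by $\sigma_\epsilon^2$ and the squared weights $\alpha_{(r:n)}^2$, and a between-component governed by the order-statistic variances $\sigma_{(r:n)}^2$ and covariances $\beta_{(r_1r_2:n)}$. The within-component is a weighted sum of i.i.d.\ terms handled by a Lindeberg central limit theorem, while the between-component is a truncated linear combination of order statistics, for which I would invoke the asymptotic-normality theory for weighted sums of concomitants of order statistics (the Bahadur representation together with the results of Yang and of Bhattacharya), checking that the smooth weight $\alpha_{(r:n)}\to J(r/n)$ and the finite-variance hypothesis satisfy the regularity conditions. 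I would then establish joint asymptotic normality of $(T_n,\sqrt{n}(\widehat{\mu}_Y-\mu_Y),\sqrt{n}(\widehat{\sigma}_Y-\sigma_Y))$ and apply the delta method to the display above to obtain $\sqrt{n}(\widehat{\rho}(s)-\rho)\Rightarrow N\big(0,\{\Psi_\infty^{\mathrm{I}}(s)/\sigma_Y^2+\rho^2\Psi_\infty^{\mathrm{II}}(s)\}/\Phi_\infty^2(s)\big)$.

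The hard part will be the exact covariance bookkeeping and the joint normality, since the order statistics, their concomitants, and the global plug-ins are all functionals of the same sample. Three points demand care: the dependence among the order statistics forces the use of L-statistic/concomitant limit theory rather than an i.i.d.\ central limit theorem; the truncation at $[ns]$ must be controlled at the boundary and the limits $\Psi_\infty^{\mathrm{I}}, \Psi_\infty^{\mathrm{II}}, \Phi_\infty$ shown to exist; and the quadratic plug-in $\widehat{\sigma}_Y$ contributes fourth-moment terms whose covariance with the concomitant sum is what ultimately produces the squared covariances $\beta_{(r_1r_2:n)}^2$ appearing in $\Psi^{\mathrm{II}}$. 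Matching every such contribution to the two aggregate quantities $\Psi^{\mathrm{I}}$ and $\Psi^{\mathrm{II}}$ is the step most likely to require the bulk of the work.
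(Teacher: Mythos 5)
Your overall architecture matches the paper's: you center $Y_{[r:n]}$ so that the estimator splits into (i) a weighted sum of the noise attached to each rank, which you handle by conditioning on the order statistics (Bhattacharya's conditional independence of the concomitants) plus a CLT, (ii) a truncated L-statistic $\sum_{r\le[ns]}\alpha_{(r:n)}\bigl(\widetilde X_{(r:n)}-\alpha_{(r:n)}\bigr)$ handled by the asymptotic-normality theory for linear functions of order statistics, and (iii) remainders from the plug-ins $\widehat\mu_Y,\widehat\sigma_Y$. That is exactly the paper's decomposition into ${\rm U}(s)$, ${\rm V}(s)$ and ${\rm R}(s)$ (the paper runs the first piece through a Skorokhod embedding to obtain a partial-sum functional CLT, but for fixed $s$ this carries the same content as your conditional Lindeberg argument).

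The genuine problem is in your final bookkeeping paragraph. You assert that the squared covariances $\beta_{(r_1r_2:n)}^2$ in $\Psi^{\rm II}$ arise from the covariance between the quadratic plug-in $\widehat\sigma_Y$ and the concomitant sum. They do not: in the paper, $\rho^2\Psi^{\rm II}_{\infty}(s)/\Phi_{\infty}^2(s)$ is precisely the asymptotic variance of the L-statistic term ${\rm V}(s)$ --- that is, of the ``between-component'' you yourself identified two sentences earlier --- obtained by citing David (2003, Theorem 11.4). The plug-ins contribute nothing to the stated variance; the paper disposes of them by noting $\widehat\sigma_Y/\sigma_Y\to_p 1$ and ${\rm R}(s)\to_p 0$ and replacing them by their limits. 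If you instead track $\sqrt n(\widehat\sigma_Y-\sigma_Y)$ and $\sqrt n(\widehat\mu_Y-\mu_Y)$ through a joint CLT and the delta method as you propose, you will generate additional $O_p(1)$ terms, e.g. $\rho\sqrt n\bigl(\sigma_Y/\widehat\sigma_Y-1\bigr)$ and $\sqrt n(\mu_Y-\widehat\mu_Y)\cdot\bigl(n^{-1}\sum_{r\le[ns]}\alpha_{(r:n)}\bigr)/\Phi_n(s)$, that do not appear in the claimed variance; you would then have to show these contributions vanish or cancel, which your sketch does not do and which is not how the stated formula is produced. So either discard the plug-in fluctuations as the paper does, or carry them honestly to the end --- but do not look for $\Psi^{\rm II}$ there; it lives entirely in the order-statistic L-statistic.
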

\noindent The proof of Theorem 2 is provided in the Appendix.

We conclude this section with two remarks regarding Theorem 2. First, in Theorem 2, 
from the tower property of the conditional expectation, 
\begin{equation}\nonumber
{\rm var}\big(\sqrt{n} \widehat{\rho} \big) > \frac{1}{ \big(1 \big/ n \big) \sum_{r=1}^{[ns]}
 \alpha_{(r:n)}^2} \ge \frac{1}{{\rm var} \big(X \big)}=1,
\end{equation}
and when $\rho=0$, the asymptotic variance of $\sqrt{n} \widehat{\rho}$ is larger than $1$,
which is the variance of the least-squares estimator in the case where $X$ is completely observed. Second, 
it is possible to test the hypothesis $\mathcal{H}_0:\rho=0$ using the statistic 
$
{\rm T}= \sqrt{n}  \widehat{\rho},
$
which has an asymptotically normal distribution of mean $0$ and variance $1\big/\Phi_{\infty}(s)$.

\subsection{Residual Analysis}

As in the classical linear model, the residuals can provide guidance 
for identifying a better model and score 
function. The residuals are defined as $e_{[r:n]}^* 
= \big(Y_{[r:n]}-\mu_Y\big)/\sigma_Y-\widehat{\rho}\alpha_{(r:n)}$ for 
$r=1,2,\ldots,[ns]$.
Statistical properties of the residuals, which are analogous to those in the classical linear model, 
are summarized as follows.  	
\begin{theorem} Under the assumptions of Theorem 2, the following statements are true for the residuals:
(i) ${\rm E} \big( e_{[r:n]}^* \big)=0$; (ii) \begin{bf}
\begin{eqnarray}
{\rm var} \big( e_{[r:n]}^* \big) &=&\Big\{ \rho^2 \beta_{(rr:n)} + \big( 1 -\rho^2
 \big) \Big\}
+ \alpha_{(r:n)}^2 \frac{1}{n\sigma_Y^2} \frac{\Psi_n^{\rm I}(s)}{\Phi_n^2(s)}
  -2 \frac{1}{\sum_{r=1}^{[ns]}\alpha_{(r:n)}^2} \nonumber\\
&&\times 
\left\{  \rho^2\sum_{w=1}^{[ns]}\alpha_{(w:n)}\alpha_{(r:n)}\beta_{(rw:n)}+\alpha_{(r:n)}^2(1-\rho^2)\right\}\nonumber
\end{eqnarray}  \end{bf} (iii) 
${\rm E} \big( e_{[r:n]}^* \alpha_{(r:n)} \big)=0$; and (iv) ${\rm E} \big( e_{[r:n]}^* \widehat{Y}^*_{[r:n]}
 \big)=0$, 
where $\widehat{Y}^*_{[r:n]}=\mu_Y- \widehat{\rho} \alpha_{(r:n)}$. 
\end{theorem}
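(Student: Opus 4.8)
The plan is to work throughout on the standardized scale. Writing $\widetilde{Y}_{[r:n]} = (Y_{[r:n]}-\mu_Y)/\sigma_Y$ and treating $\mu_Y,\sigma_Y$ as known, so that $\widehat{\rho}$ is the exact least-squares slope $\widehat{\rho}=\sum_{w=1}^{[ns]}\alpha_{(w:n)}\widetilde{Y}_{[w:n]}\big/\sum_{w=1}^{[ns]}\alpha_{(w:n)}^2$, the residual becomes $e^*_{[r:n]}=\widetilde{Y}_{[r:n]}-\widehat{\rho}\,\alpha_{(r:n)}$. From (\ref{eqn:con-mean}) and the accompanying second-moment identities I would record the three facts I need: ${\rm E}(\widetilde{Y}_{[r:n]})=\rho\,\alpha_{(r:n)}$, ${\rm var}(\widetilde{Y}_{[r:n]})=\rho^2\beta_{(rr:n)}+(1-\rho^2)$, and ${\rm cov}(\widetilde{Y}_{[r:n]},\widetilde{Y}_{[s:n]})=\rho^2\beta_{(rs:n)}$ for $r\ne s$. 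Everything then reduces to linear and bilinear algebra in the $\widetilde{Y}_{[r:n]}$.

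For (i) I would first establish that $\widehat{\rho}$ is unbiased: by linearity ${\rm E}(\widehat{\rho})=\rho\sum_w\alpha_{(w:n)}^2\big/\sum_w\alpha_{(w:n)}^2=\rho$, whence ${\rm E}(e^*_{[r:n]})=\rho\,\alpha_{(r:n)}-\rho\,\alpha_{(r:n)}=0$. Statement (iii) is then immediate because $\alpha_{(r:n)}$ is deterministic; I would also record its exact sample counterpart, the least-squares normal equation, which I treat as a master identity: differentiating $\sum_r(\widetilde{Y}_{[r:n]}-\rho\,\alpha_{(r:n)})^2$ and evaluating at $\widehat{\rho}$ gives $\sum_{r=1}^{[ns]}\alpha_{(r:n)}\,e^*_{[r:n]}=0$ identically. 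Statement (iv) then follows by writing the fitted values as $\widehat{Y}^*_{[r:n]}=\widehat{\rho}\,\alpha_{(r:n)}$ and factoring the common (random) slope out of the sum, $\sum_r\widehat{Y}^*_{[r:n]}e^*_{[r:n]}=\widehat{\rho}\sum_r\alpha_{(r:n)}e^*_{[r:n]}=0$; here it is essential that the orthogonality holds in the aggregate over $r$, since the per-index products need not vanish.

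The main work is (ii). I would expand ${\rm var}(e^*_{[r:n]})={\rm var}(\widetilde{Y}_{[r:n]})+\alpha_{(r:n)}^2\,{\rm var}(\widehat{\rho})-2\,\alpha_{(r:n)}\,{\rm cov}(\widetilde{Y}_{[r:n]},\widehat{\rho})$ and evaluate the three pieces. The first is $\rho^2\beta_{(rr:n)}+(1-\rho^2)$, the leading bracket of the claimed formula. For the cross term I would use that $\widetilde{Y}_{[r:n]}$ is itself one of the summands of $\widehat{\rho}$, giving ${\rm cov}(\widetilde{Y}_{[r:n]},\widehat{\rho})=\big(\sum_w\alpha_{(w:n)}^2\big)^{-1}\{\rho^2\sum_w\alpha_{(w:n)}\beta_{(rw:n)}+(1-\rho^2)\alpha_{(r:n)}\}$; multiplying by $-2\alpha_{(r:n)}$ reproduces the last bracket exactly. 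The middle term requires ${\rm var}(\widehat{\rho})$, and I would match its diagonal contribution $\big(\sum_w\alpha_{(w:n)}^2\big)^{-2}\sum_w\alpha_{(w:n)}^2\,{\rm var}(\widetilde{Y}_{[w:n]})$ to $\Psi^{\rm I}_n(s)\big/\{n\sigma_Y^2\Phi_n^2(s)\}$ using $\Phi_n(s)=n^{-1}\sum_w\alpha_{(w:n)}^2$ and identifying $\sigma^2_{(w:n)}$ with the variance of the standardized concomitant response.

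The hard part will be the bookkeeping in this last step: unlike in ordinary regression the predictors $\alpha_{(r:n)}$ are fixed, but the responses $\widetilde{Y}_{[w:n]}$ are dependent across $w$ through the order-statistic correlations $\beta_{(rw:n)}$, so $\widehat{\rho}$ and $\widetilde{Y}_{[r:n]}$ are correlated and ${\rm var}(\widehat{\rho})$ splits into a diagonal part and an off-diagonal cross-covariance part. I would need to be explicit about which contribution is absorbed into $\Psi^{\rm I}_n(s)$ (the diagonal, second-moment part that enters the stated variance) versus the cross terms that are the source of the $\Psi^{\rm II}_n(s)$ contribution in Theorem 2, and to confirm that the stated formula retains precisely the diagonal piece through $\Psi^{\rm I}_n$. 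Matching the normalizations $\Phi_n$, $\Psi^{\rm I}_n$ and the factor $1/(n\sigma_Y^2)$ correctly is the only delicate point; the remainder is the routine three-term variance expansion above.
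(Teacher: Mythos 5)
The paper gives no proof of this theorem (it is declared ``simple algebra'' and omitted), so there is nothing to compare against line by line; your three-term expansion on the standardized scale is clearly the computation the authors intend, and your handling of (i) and (iii) is correct, as is your observation that the exact identities require treating $\mu_Y,\sigma_Y$ as known in $\widehat{\rho}$. Your reading of (iv) is also the right one: per index, ${\rm E}\big(e^*_{[r:n]}\widehat{Y}^*_{[r:n]}\big)=\mp\alpha_{(r:n)}\big\{{\rm cov}\big(\widetilde{Y}_{[r:n]},\widehat{\rho}\big)-\alpha_{(r:n)}{\rm var}\big(\widehat{\rho}\big)\big\}$, which cancels in i.i.d.\ homoskedastic regression but not under the covariance structure (4), where ${\rm cov}\big(\widetilde{Y}_{[r:n]},\widetilde{Y}_{[w:n]}\big)=\rho^2\beta_{(rw:n)}$ varies with $w$; only the aggregate $\sum_r \widehat{Y}^*_{[r:n]}e^*_{[r:n]}=0$ survives, and flagging that is a genuine correction to the statement rather than a gap in your argument. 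Your cross term $-2\alpha_{(r:n)}{\rm cov}\big(\widetilde{Y}_{[r:n]},\widehat{\rho}\big)$ reproduces the last bracket of (ii) exactly.

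The one place your plan does not close is the one you honestly flag, and it cannot be closed as stated: the middle term. Exactly,
\begin{equation} \nonumber
{\rm var}\big(\widehat{\rho}\big)=\Big(\sum_{w}\alpha_{(w:n)}^2\Big)^{-2}\bigg\{\sum_{w}\alpha_{(w:n)}^2\big(\rho^2\beta_{(ww:n)}+1-\rho^2\big)+\rho^2\sum_{w\ne v}\alpha_{(w:n)}\alpha_{(v:n)}\beta_{(wv:n)}\bigg\},
\end{equation}
whereas the displayed formula keeps only a diagonal block written as $\Psi^{\rm I}_n(s)/\{n\sigma_Y^2\Phi_n^2(s)\}$. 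Two issues follow. First, with the paper's literal definition $\sigma^2_{(w:n)}=\sigma^2\big(X_{(w:n)}\big)$, the quantity $\Psi^{\rm I}_n(s)/\{n\sigma_Y^2\Phi_n^2(s)\}$ does not equal the diagonal block above; you must reinterpret $\sigma^2_{(w:n)}$ as ${\rm var}\big(Y_{[w:n]}\big)$ (as you propose) for even the diagonal part to match. Second, the off-diagonal block is not of smaller order than the retained one: each $\beta_{(wv:n)}$ is $O(1/n)$ and there are $O(n^2)$ of them against the normalization $\big(\sum_w\alpha_{(w:n)}^2\big)^2=O(n^2)$, so both blocks are $O(1/n)$. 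Hence you cannot ``confirm that the stated formula retains precisely the diagonal piece'' and arrive at an exact equality; you must either carry the $\rho^2\sum_{w\ne v}$ term explicitly (changing the displayed (ii)) or state that (ii) holds only up to an $O(1/n)$ off-diagonal correction, which is harmless for the theorem's purpose since the leading bracket is $O(1)$. This is a defect of the theorem as printed rather than of your method, but your proof as planned will not terminate in the displayed identity without resolving it.
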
 
\noindent The proof of Theorem 3 requires only simple algebra and is
thus omitted here. The theorem states 
that the residuals have mean $0$ and finite variance, and also states that they are uncorrelated with 
the scores $\alpha_{(r:n)}$ and the predicted values $\widehat{Y}_{[r:n]}$. Thus, the residual plots, which 
are the plots of (i) $r$ versus $e_{[r:n]}^*$, (ii) $\alpha_{(r:n)}$ versus $e_{[r:n]}^*$, and (iii)
 $\widehat{Y}_{[r:n]}$ versus $e_{[r:n]}^*$, have the same interpretations as those of the 
classical linear model. We plug in $\mu_Y$ and $\sigma_Y$ with their empirical estimators and use 
$e_{[r:n]}=\big(Y_{[r:n]}-\widehat{\mu}_Y-\big)/\widehat{\sigma}_Y-\widehat{\rho} \alpha_{(r:n)}$.

The residual sum of squares may be another useful tool for measuring the goodness of fit of 
the proposed model, as in the classical linear model. The residual sum of squares in our model is defined
 as 
\begin{equation} \nonumber 
{\rm RSS} = \sum_{r=1}^{[ns]} \bigg(\frac{Y_{[r:n]}-\widehat{\mu}_Y}{\widehat{\sigma}_Y} -\widehat{Y}_{[r:n]} \bigg)^2
\end{equation}
and will be used along with the residual plots as a guide for selecting a better score function.

Finally,  the proposed least-squares estimator (\ref{eqn:lse}) 
assumes that the regression line between $\alpha_{(r:n)}$ and $\big(Y_{[r:n]} - \widehat{\mu}_Y \big)$ 
has an intercept (at the $y$ axis) of $0$. Thus, if the model (or the score function) is correctly 
specified, then the intercept estimated by the regression (with intercept) should be close to $0$, and 
the estimated intercept therefore serves as a measure for checking the correctness of the score function. Note 
that the regression (without intercept) performed in this paper is based on 
observations of the top $[ns]$ ranks and assumes that the function passes through the origin (see Figure 4).

\subsection{An Estimator with Unranked Observations}
   
The least-squares estimator presented in Section 3.2 does not fully use the information contained in 
$\big\{ Y_{[r:n]}:=Y_i {\rm I}(R_i=r), r >m \big\}$;  
it is used only to estimate $\mu_Y$ and $\sigma_Y$, not to estimate $\rho$ itself. In this section, we briefly
demonstrate how $\widehat{\rho}$ can be modified to incorporate these unranked observations.

We consider the following modified estimator: 
\begin{equation} \nonumber 
\widehat{\rho}_{\rm m} 
\big(s \big) \equiv \frac{1}{\widehat{\sigma}_Y} \cdot
 \frac{\sum_{r=1}^{[ns]} \alpha_{(r:n)}\big\{  Y_{[r:n]} -\widehat{\mu}_Y  \big\} + 
\big(n-[ns] \big) \overline{\alpha}_{[ns]+} \big(\overline{Y}_{[ns]+} -\widehat{\mu} \big)}
{\sum_{r=1}^{[ns]} \alpha_{(r:n)}^2 + \big(n-[ns]\big) \overline{\alpha}_{[ns]+}^2},
\end{equation} 
where  $\overline{\alpha}_{[ns]+}=\sum_{r=[ns]+1}^n \alpha_{(r:n)} \big/ \big(n-[ns]\big)$ 
and $\overline{Y}_{[ns]+}=\sum_{r=[ns]+1}^n Y_{[r:n]}\big/ \big(n-[ns]\big)$. This modified
estimator also asymptotically approaches the normal distribution. Specifically, suppose that 
\begin{equation} \nonumber 
\widetilde{\alpha}_{(r:n)}=
\left\{
\begin{array}{ll}
\alpha_{(r:n)} & \quad r=1,2,\ldots,[ns],\\
& \\
\overline{\alpha}_{[ns]+} & \quad r=[ns]+1,[ns]+2,\ldots,n.
\end{array}
\right.
\end{equation}
We also suppose that \\$\widetilde{\Psi}_n^{\rm I} (s)= \big(1\big/n\big) \left\{ \sum_{r=1}^{n} \widetilde{\alpha}_{(r:n)}^2
 \sigma_{(r:n)}^2\right\}$, $\widetilde{\Psi}_n^{\rm II} (s)=\big(1\big/n\big)\left\{
 \sum_{r1=1}^{n} \sum_{r2=1}^{n} \widetilde{\alpha}_{(r1:n)} \widetilde{\alpha}_{(r2:n)}\beta_{(r1,r2:n)}^2
 \right\}$ and 
$\widetilde{\Phi}_n (s)= \big(1 \big/n \big) \sum_{r=1}^{n} \widetilde{\alpha}_{(r:n)}^2$. 
As in the previous section, $(1/n)$-scaled 
limits of $\widetilde{\Psi}_n^{\rm I} (s)$, $\widetilde{\Psi}_n^{\rm II} (s)$ and $\widetilde{\Phi}_n (s)$ exist; let 
 the limits be $\widetilde{\Psi}_{\infty}^{\rm I}(s)= \lim_{n \rightarrow \infty}\widetilde{\Psi}^{\rm I}_n (s)\big/n$, $\widetilde{\Psi}_{\infty}^{\rm II}(s)= \lim_{n \rightarrow \infty}\tilde{\Psi}^{\rm II}_n (s)\big/n$ and 
$\tilde{\Phi}_{\infty}(s)= \lim_{n \rightarrow \infty} \tilde{\Phi}_n (s) \big/n$, respectively. 
Then, we can write the following theorem.  
\begin{theorem}
Under 
the same assumptions as those of Theorem 2, the distribution of $\sqrt{n} \big(\widehat{\rho}_{\rm m} (s) - \rho \big)$ converges 
to the normal distribution with mean $0$ and variance 
 $\big(\widetilde{\Psi}_{\infty}^{\rm I}(s)/\sigma_Y^2+\rho^2\widetilde{\Psi}_{\infty}^{\rm II}(s)\big)/\widetilde{\Phi}_{\infty}^2(s)$
\end{theorem}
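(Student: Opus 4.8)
The plan is to reduce Theorem 4 to Theorem 2 by an exact algebraic identity, so that no genuinely new limit theorem is required. First I would distribute the aggregated tail term in the numerator of $\widehat{\rho}_{\rm m}(s)$, writing $\big(n-[ns]\big)\overline{\alpha}_{[ns]+}\big(\overline{Y}_{[ns]+}-\widehat{\mu}_Y\big)=\sum_{r=[ns]+1}^{n}\overline{\alpha}_{[ns]+}\big(Y_{[r:n]}-\widehat{\mu}_Y\big)$, and likewise $\big(n-[ns]\big)\overline{\alpha}_{[ns]+}^2=\sum_{r=[ns]+1}^{n}\overline{\alpha}_{[ns]+}^2$ in the denominator. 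Together with the definition of the flattened scores $\widetilde{\alpha}_{(r:n)}$ this collapses the estimator to the compact form
\begin{equation} \nonumber
\widehat{\rho}_{\rm m}(s)=\frac{1}{\widehat{\sigma}_Y}\cdot\frac{\sum_{r=1}^{n}\widetilde{\alpha}_{(r:n)}\big(Y_{[r:n]}-\widehat{\mu}_Y\big)}{\sum_{r=1}^{n}\widetilde{\alpha}_{(r:n)}^2},
\end{equation}
which is precisely the least-squares estimator (\ref{eqn:lse}) with the score sequence $\alpha_{(r:n)}$ replaced by $\widetilde{\alpha}_{(r:n)}$ and the upper summation limit $[ns]$ replaced by $n$.

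Given this identity, I would re-run the proof of Theorem 2 under the substitution $\alpha_{(r:n)}\mapsto\widetilde{\alpha}_{(r:n)}$ and $[ns]\mapsto n$, checking that every structural fact the argument relies on survives. The $\widetilde{\alpha}_{(r:n)}$ are again deterministic constants; they still sum to zero, since $\sum_{r=1}^{n}\widetilde{\alpha}_{(r:n)}=\sum_{r=1}^{n}\alpha_{(r:n)}=0$; the plug-in estimators $\widehat{\mu}_Y$ and $\widehat{\sigma}_Y$ are computed from the full sample and so are unchanged and consistent; and the limits $\widetilde{\Psi}_{\infty}^{\rm I}(s)$, $\widetilde{\Psi}_{\infty}^{\rm II}(s)$ and $\widetilde{\Phi}_{\infty}(s)$ are assumed to exist, exactly mirroring $\Psi_{\infty}^{\rm I}(s)$, $\Psi_{\infty}^{\rm II}(s)$ and $\Phi_{\infty}(s)$. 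Feeding $\widetilde{\alpha}$ through the conditional-moment identities (\ref{eqn:con-linear})--(\ref{eqn:con-mean}) then reproduces, term for term, the two variance contributions of Theorem 2 as $\widetilde{\Psi}_n^{\rm I}(s)$ and $\widetilde{\Psi}_n^{\rm II}(s)$, with normalizing factor $\widetilde{\Phi}_n(s)$; Slutsky's theorem, applied with the consistency of $\widehat{\sigma}_Y$, then delivers the stated limiting variance $\big(\widetilde{\Psi}_{\infty}^{\rm I}(s)/\sigma_Y^2+\rho^2\widetilde{\Psi}_{\infty}^{\rm II}(s)\big)/\widetilde{\Phi}_{\infty}^2(s)$.

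The step demanding genuine care, and the one I expect to be the main obstacle, is the central limit step itself: the flattened scores are constant across the whole block $r>[ns]$, which contains $O(n)$ ranks, whereas the Theorem 2 scores $\alpha_{(r:n)}$ vary with every rank. I would therefore verify that the CLT underlying Theorem 2 is a statement about linear functionals $\sum_{r}c_{r,n}\big(Y_{[r:n]}-\widehat{\mu}_Y\big)$ of the concomitants for \emph{arbitrary} deterministic weight arrays $c_{r,n}$ satisfying a Lindeberg-type regularity condition, rather than one tied to the monotonicity of the moments-of-order-statistics scores. Two points make this mild. The averaging that defines $\overline{\alpha}_{[ns]+}$ can only temper, never inflate, the scores it replaces, since an average of the tail scores lies between their extremes; hence the flattened array is no rougher in its tail than the original, and whatever regularity guarantees the $\Psi$- and $\Phi$-limits for $\alpha$ carries over to $\widetilde{\alpha}$. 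Moreover, using $\sum_{r=1}^{n}\big(Y_{[r:n]}-\widehat{\mu}_Y\big)=0$ once more, the numerator collapses further to the observed-block sum $\sum_{r=1}^{[ns]}\big(\alpha_{(r:n)}-\overline{\alpha}_{[ns]+}\big)\big(Y_{[r:n]}-\widehat{\mu}_Y\big)$, so all the randomness driving $\widehat{\rho}_{\rm m}(s)$ is again carried by only the first $[ns]$ concomitants, with nothing but a constant shift of their scores. This places the central limit step on exactly the same footing as in Theorem 2 and closes the argument.
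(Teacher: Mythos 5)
Your proposal is correct and follows essentially the same route as the paper: the paper's own proof consists precisely of the algebraic identity rewriting $\widehat{\rho}_{\rm m}(s)$ as the least-squares estimator with the flattened scores $\widetilde{\alpha}_{(r:n)}$ summed to $n$, followed by an appeal to the arguments of Theorem 2. Your additional checks (that the flattened scores still sum to zero, that the randomness collapses onto the first $[ns]$ concomitants, and that the Theorem 2 machinery tolerates the constant tail block) are sensible elaborations of the step the paper leaves implicit, not a different argument.
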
 
\begin{proof} 
\begin{eqnarray}
\sqrt{n} \big( \widehat{\rho}_{\rm m} - \rho \big) &=& 
\sqrt{n} \bigg\{\frac{1}{\sum_{r=1}^{[ns]} \alpha_{(r:n)}^2 + \big(n-[ns]\big) \overline{\alpha}_{[ns]+}^2}\times
 \nonumber\\
&& \qquad \bigg( \sum_{r=1}^{[ns]} \alpha_{(r:n)} \Big( Y_{[r:n]} - \widehat{\mu}_Y \big)+\big(n-[ns] 
\big) \overline{\alpha}_{[ns]+} \big(\overline{Y}_{[ns]+}-\widehat{\mu}_Y \big) \bigg)-\rho\bigg\} \nonumber\\
&=& \sqrt{n} \bigg(\frac{1}{\widehat{\sigma}_Y} \frac{\sum_{r=1}^n \tilde{\alpha}_{(r:n)} \big( Y_{[r:n]} -
 \widehat{\mu}_Y \big)} {\sum_{r=1}^n \widetilde{\alpha}_{(r:n)}^2}-\rho\bigg), \nonumber 
\end{eqnarray}
the distribution of which converges to the normal distribution with mean $0$ and variance \\
$\big(\widetilde{\Psi}_{\infty}^{\rm I}(s)/\sigma_Y^2+\rho^2\widetilde{\Psi}_{\infty}^{\rm II}(s)\big)/\widetilde{\Phi}_{\infty}^2(s)$
following the same arguments presented in the proof of Theorem 2. 
\end{proof}

\section{Numerical Study}

In this section, we numerically investigate the advantage 
we can gain by choosing the correct score function to estimate 
$\rho={\rm corr} (Y,X)$. The performance of an estimator is 
measured in terms of its bias and its mean square error (MSE), which 
we numerically estimate based on 
$1000$ simulated data sets and the estimators obtained therefrom.

The data sets are generated from the regression model 
$Y_i = \beta_0 + \beta_1 X_i + \epsilon_i,$ $i=1,2,\ldots,n,$ where the $\epsilon_i$ are
independently drawn from $N(0,1)$. We consider three distributions for $X$: 
the uniform distribution on $[0,1]$, the standard normal distribution, and the gamma 
distribution with mean $1$ and variance $1/3$.  As stated in Section 2, the score function of 
the uniform distribution is almost equivalent to the identity score function $S_n(r)=r$. However, the normal distribution and the gamma distribution have 
heavier tails than does the uniform distribution, and their score functions are convex in the right 
tail. We set the parameters $\delta$ to ensure that $\rho = 0$, $0.3$, $0.5$ and $0.7$, where
 $\rho=\delta/\sigma_Y$. Finally, in each considered case, the sample size $n$ and 
the number of partially observed ranks $m$ are set to all possible combinations of 
 $n = 500$ or $2000$ and $r=20$, $50$, or $100$.
When estimating $\rho$, we apply
four different scores, including the proposed MOS-based score functions
obtained from the three distributions listed above and the identity score function, which is commonly used in practice. The approximated bias and MSE values are reported in Tables 1 and 2.

We can observe several interesting findings from these tables.
First, the correctly specified score
function performs better than do others when there exists a strong correlation between $X$ and $Y$ (when 
$\rho$ is large). However, when $\rho=0$, there is almost no difference among the four considered 
scores. Second, as the number of observations increases, in the sense that either $r$ or $n$ increases, 
the superiority of the correctly specified scores with respect to the others becomes apparent even when $\rho$ is not large. 
Third, as conjectured in the previous section, the scores based on the uniform distribution 
perform almost identically to the identity scores. Finally, the
differences between the correctly specified scores and the others are
significant regardless of $\rho$ or the sample size ($r$ or $n$) 
when the distribution of $X$ has a heavier right tail (the gamma distribution).

\begin{table}[htp!]
\begin{center} \scriptsize{
	\begin{tabular}{|c|c|c|c|c|c|c|c|c| }
        \hline
       &  &&\multicolumn{2}{|c|}{$U(0,1)$}&\multicolumn{2}{|c|}{$N(0,1)$} & \multicolumn{2}{|c|}{$G(3,3)$}\\
    \cline{2-9}
&	$\rho$&Dist  & Bias & MSE & Bias & MSE & Bias   & MSE  \\
         \hline
\multirow{15}{1cm}{r=20}  &  $0.0$&1:N&    0.0009  & 0.0167     & -0.0009  & 0.0165   & -0.0025  & 0.0184    \\
   &       	 &U&-$\ul{0.0009}$  &$\ul{0.0175}$     & -0.0027  & 0.0174  & 0.0051  & 0.0170   \\
&	 &N&    -0.0032  & 0.0103     &$\ul{0.0009}$  &$\ul{0.0104}$   & 0.0008  & 0.0097   \\
&	 &G&    0.0001  &${\bf 0.0059}$     & -0.0044  &${\bf 0.0059}$ &-$\ul{0.0021}$  &${\bf\ul{0.0057}}$  \\
                 \cline{2-9}
  &       $0.3$&1:N&    -0.0003  & 0.0161   & 0.0923  & 0.0243  & 0.2090  & 0.0600  \\
&		   &U&$\ul{0.0026}$  &$\ul{0.0158}$     & 0.0900  & 0.0256 & 0.2122  & 0.0603  \\
&		   &N&    -0.0715  &${\bf 0.0141}$     &-$\ul{0.0010}$  &${\bf \ul{0.0098}}$  & 0.0990  & 0.0194\\
&		   &G&    -0.1287  & 0.0218     & -0.0794  & 0.0118  &-$\ul{0.0024}$  &${\bf\ul{0.0057}}$  \\
		         \cline{2-9}
&		 $0.5$&1:N&    0.0012  & 0.0127     & 0.1437  & 0.0336  & 0.3430  & 0.1327       \\
&		   &U&$\ul{0.0035}$  &${\bf\ul{0.0121}}$     & 0.1474  & 0.0351  & 0.3476  & 0.1354       \\
&		   &N&    -0.1217  & 0.0222     &$\ul{0.0009}$  &${\bf\ul{0.0078}}$  & 0.1566  & 0.0330       \\
&		   &G&    -0.2180  & 0.0518     & -0.1241  & 0.0197   &-$\ul{0.0022}$  &${\bf\ul{0.0052}}$    \\
		        \cline{2-9}	
&		 $0.7$&1:N&    -0.0025  &${\bf 0.0087}$  & 0.2057  & 0.0522    & 0.4916  & 0.2535       \\
&		   &U&-$\ul{0.0003}$  &$\ul{0.0090}$  & 0.2053  & 0.0517   & 0.4886  & 0.2505       \\
&		   &N&    -0.1693  & 0.0334       &-$\ul{0.0009}$  &${\bf\ul{0.0057}}$  & 0.2266  & 0.0584       \\
&		   &G&    -0.3051  & 0.0958       & -0.1750  & 0.0338  &-$\ul{0.0057}$  &${\bf\ul{0.0040}}$     \\
\hline
\multirow{15}{1cm}{r=50}&	
 $0.0$&1:N&    0.0011  & 0.0077  & 0.0007  & 0.0076 & -0.0020  & 0.0069   \\
		&   &U&-$\ul{0.0019}$  &$\ul{0.0075}$  & -0.0005  & 0.0071 & 0.0034  & 0.0076  \\
		 &  &N&    -0.0038  & 0.0056  &$\ul{0.0008}$  &$\ul{0.0054}$ & 0.0008  & 0.0053  \\
		  & &G&    -0.0016  &${\bf 0.0035}$  & 0.0004  &${\bf 0.0039}$ &$\ul{0.0004}$  &${\bf\ul{0.0034}}$  \\
                 \cline{2-9}
 & $0.3$&1:N&    -0.0033  &${\bf 0.0064}$  & 0.0405  & 0.0083 &0.1116 & 0.0196 \\
&		   &U&-$\ul{0.0002}$  &$\ul{0.0066}$  & 0.0427  & 0.0082  & 0.1112  & 0.0187  \\
&	       &N&    -0.0418  & 0.0067  &-$\ul{0.0025}$  &${\bf\ul{0.0051}}$ & 0.0735  & 0.0110  \\
&		   &G&    -0.0988  & 0.0131  & -0.0604  & 0.0066 &-$\ul{0.0029}$  &${\bf\ul{0.0037}}$  \\
		        \cline{2-9}
&		 $0.5$&1:N&    0.0022  & 0.0050    & 0.0657  & 0.0096 & 0.1878  & 0.0413    \\
&		   &U&-$\ul{0.0011}$  &${\bf\ul{0.0049}}$  & 0.0687  & 0.0101 & 0.1882  & 0.0408    \\
&		   &N&    -0.0717  & 0.0093    &$\ul{0.0022}$  &${\bf\ul{0.0040}}$ & 0.1217  & 0.0192    \\
&		   &G&    -0.1652  & 0.0298    & -0.0982  & 0.0123 &-$\ul{0.0023}$  &${\bf\ul{0.0031}}$  \\
		         \cline{2-9}		
&		 $0.7$&1:N&    -0.0006  & 0.0035    & 0.0897  & 0.0116 & 0.2655  & 0.0744    \\
&		   &U&$\ul{0.0006}$  &${\bf\ul{0.0035}}$  & 0.0975  & 0.0133 & 0.2641  & 0.0736    \\
&		   &N&    -0.0998  & 0.0127    &$\ul{0.0044}$  &${\bf\ul{0.0026}}$ & 0.1655  & 0.0307    \\
&		   &G&    -0.2293  & 0.0544    & -0.1401  & 0.0216 &$\ul{0.0040}$  &${\bf\ul{0.0021}}$ \\
\hline
\multirow{15}{1cm}{r=100}&   $0.0$&1:N&    0.0005  & 0.0042     & 0.0005  & 0.0042  & 0.0040  & 0.0042     \\
	 &	   &U&-$\ul{0.0017}$  &$\ul{0.0040}$     & 0.0000  & 0.0043 & -0.0021  & 0.0039     \\
	&	   &N&    0.0003  & 0.0035     &-$\ul{0.0029}$  &$\ul{0.0036}$ & -0.0012  & 0.0040     \\
	&	   &G&    0.0003  &${\bf 0.0028}$     & 0.0005  &${\bf 0.0027}$ &$\ul{0.0007}$  &${\bf\ul{0.0027}}$     \\
                \cline{2-9}
   &      $0.3$&1:N&    0.0013  & 0.0035  & 0.0107  & 0.0034 & 0.0490  & 0.0061    \\
	&	   &U&$\ul{0.0008}$  &$\ul{0.0037}$  &0.0099  &0.0036 & 0.0498  & 0.0062    \\
	&	   &N&    -0.0210  &${\bf 0.0033}$  &$\ul{ 0.0005}$  &$\ul{ 0.0034}$ & 0.0517  & 0.0063    \\
&		   &G&    -0.0714  & 0.0074  & -0.0479  &${\bf 0.0049}$ &-$\ul{0.0038}$  &${\bf\ul{ 0.0026}}$ \\
		        \cline{2-9}
 &		 $0.5$&1:N&    0.0003  & 0.0028    & 0.0161  & 0.0031 & 0.0896  & 0.0110    \\
&		   &U&-$\ul{0.0002}$  &${\bf\ul{ 0.0028}}$  & 0.0152  & 0.0031 & 0.0890  & 0.0108    \\
&		   &N&    -0.0352  & 0.0036    &-$\ul{0.0011}$  &${\bf\ul{ 0.0025}}$ & 0.0836  & 0.0096    \\
&		   &G&    -0.1187  & 0.0159    & -0.0823  & 0.0088 &-$\ul{0.0035}$  &${\bf\ul{ 0.0020}}$  \\
		        \cline{2-9}		
&		 $0.7$&1:N&    0.0008  & 0.0017    & 0.0230  & 0.0022 & 0.1210  & 0.0166    \\
&		   &U&$\ul{    0.0004}$  &${\bf\ul{ 0.0017}}$  & 0.0225  & 0.0023 & 0.1247  & 0.0175    \\
&		   &N&    -0.0468  & 0.0038    &-$\ul{0.0002}$  &${\bf\ul{ 0.0017}}$ & 0.1186  & 0.0158    \\
&		   &G&    -0.1689  & 0.0297    & -0.1140  & 0.0143 &-$\ul{0.0016}$  &${\bf\ul{ 0.0014}}$  \\ \hline
     \end{tabular}
       }
\bigskip
	   \caption{$n=500$: In the MSE columns, the numbers in bold-faced text are the smallest among the evaluated score functions. In both the bias and MSE columns, the underlined numbers are the true values (those from the correctly specified score functions).}
\end{center}
\end{table}

\begin{table}[htp!]
\begin{center} \scriptsize{
	\begin{tabular}{|c|c|c|c|c|c|c|c|c| }
        \hline
        & &&\multicolumn{2}{|c|}{$U(0,1)$}&\multicolumn{2}{|c|}{$N(0,1)$}& \multicolumn{2}{|c|}{$G(3,3)$}\\
          \cline{2-9}
&		$\rho$&Dist &  Bias & MSE & Bias & MSE & Bias   & MSE  \\
         \hline
   \multirow{15}{1cm}{r=20}&       $0.0$&1:N&    -0.0007  & 0.0173  & 0.0000  & 0.0168 & 0.0006  & 0.0165  \\
        &   &U&$\ul{    0.0077}$  &$\ul{ 0.0168}$  & 0.0029  & 0.0166 & 0.0066  & 0.0170  \\
	&	   &N&    -0.0042  & 0.0072  &-$\ul{0.0023}$  &$\ul{ 0.0069}$ & 0.0008  & 0.0068  \\
	&	   &G&    -0.0014  &${\bf 0.0032}$  & 0.0017  &${\bf 0.0034}$ &$\ul{ 0.0005}$  &${\bf\ul{ 0.0031}}$  \\
        \cline{2-9}
       &  $0.3$&1:N&    -0.0020  &${\bf 0.0157}$  & 0.1650  & 0.0433 & 0.3691  & 0.1519    \\
	&	   &U&-$\ul{0.0033}$  &$\ul{ 0.0158}$  & 0.1630  & 0.0427 & 0.3707  & 0.1537    \\
	&	   &N&    -0.1068  & 0.0174  &$\ul{ 0.0028}$  &${\bf\ul{ 0.0062}}$ & 0.1319  & 0.0249    \\
	&	   &G&    -0.1680  & 0.0312  & -0.0925  & 0.0118 &$\ul{ 0.0041}$  &${\bf\ul{ 0.0031}}$  \\
	\cline{2-9}
	&	 $0.5$&1:N&    -0.0053  & 0.0123  & 0.2821  & 0.0931  & 0.6109  & 0.3891    \\
	&	   &U&$\ul{    0.0009}$  &${\bf\ul{ 0.0122}}$  & 0.2714 & 0.0870 & 0.6163 &	0.3952  \\
	&	   &N&    -0.1822  & 0.0387  &-$\ul{0.0010}$  &${\bf\ul{ 0.0056}}$ & 0.2238  & 0.0566    \\
&		   &G&    -0.2821  & 0.0821  & -0.1537  & 0.0263 &$\ul{ 0.0020}$  &${\bf\ul{ 0.0031}}$  \\
		\cline{2-9}	
	&	 $0.7$&1:N&    -0.0013  &${\bf 0.0083}$    & 0.3825  & 0.1558 & 0.8650  & 0.7618    \\
&		   &U&$\ul{    0.0030}$  &$\ul{ 0.0091}$  & 0.3859  & 0.1586 & 0.8623  & 0.7575    \\
&		   &N&    -0.2528  & 0.0671    &$\ul{ 0.0014}$  &${\bf\ul{ 0.0039}}$ & 0.3113  & 0.1026    \\
&		   &G&    -0.3936  & 0.1567    & -0.2182  & 0.0496  &$\ul{ 0.0037}$  &${\bf\ul{ 0.0029}}$   \\ \hline
 \multirow{15}{1cm}{r=50}&   $0.0$&1:N&    -0.0013  & 0.0071  & 0.0012  & 0.0070 & 0.0009  & 0.0071  \\
		&   &U&$\ul{    0.0024}$  &$\ul{ 0.0071}$  & 0.0020  & 0.0066 & -0.0008  & 0.0075  \\
		&   &N&    0.0015  & 0.0036  &$\ul{ 0.0040}$  &$\ul{ 0.0038}$ & -0.0007  & 0.0033  \\
		&   &G&    0.0002  &${\bf 0.0018}$  & -0.0009  &${\bf 0.0018}$ &$\ul{ 0.0011}$  &${\bf\ul{ 0.0019}}$  \\
        \cline{2-9}
    &     $0.3$&1:N&    0.0016  & 0.0062  & 0.1147  & 0.0196 & 0.2589  & 0.0737    \\
	&	   &U&-$\ul{0.0028}$  &${\bf\ul{ 0.0062}}$  & 0.1136  & 0.0192 & 0.2667  & 0.0776    \\
	 &      &N&    -0.0851  & 0.0102  &$\ul{ 0.0012}$  &${\bf\ul{ 0.0035}}$ & 0.1121  & 0.0159    \\
&		   &G&    -0.1456  & 0.0229  & -0.0818  & 0.0083 &$\ul{ 0.0002}$  & ${\bf\ul{0.0018}}$  \\
	\cline{2-9}
 &		 $0.5$&1:N&    0.0025  & 0.0051    & 0.1953  & 0.0436 & 0.4396  & 0.1988    \\
&		   &U&-$\ul{0.0008}$  &${\bf\ul{ 0.0048}}$  & 0.1928  & 0.0426 & 0.4389  & 0.1986    \\
&		   &N&    -0.1486  & 0.0246    &-$\ul{0.0006}$  &${\bf\ul{ 0.0025}}$ & 0.1854  & 0.0376    \\
&		   &G&    -0.2457  & 0.0617    & -0.1371  & 0.0202 &-$\ul{0.0022}$  &${\bf\ul{ 0.0017}}$  \\
		\cline{2-9}	
&		 $0.7$&1:N&    0.0006  & 0.0035    & 0.2698  & 0.0767 & 0.6124  & 0.3806    \\
&		   &U&$\ul{    0.0011}$  &${\bf\ul{ 0.0034}}$  & 0.2691  & 0.0761 & 0.6195  & 0.3882    \\
&		   &N&    -0.2019  & 0.0426    &$\ul{ 0.0015}$  &${\bf\ul{ 0.0020}}$ & 0.2578  & 0.0690    \\
&		   &G&    -0.3429  & 0.1184    & -0.1901  & 0.0372 &-$\ul{0.0018}$  &${\bf\ul{ 0.0014}}$  \\ \hline
\multirow{15}{1cm}{r=100}&  $0.0$&1:N&    0.0002  & 0.0036  & 0.0015  & 0.0038 & 0.0039  & 0.0034  \\
		  & &U&-$\ul{0.0008}$  &$\ul{ 0.0035}$  & 0.0013  & 0.0036 & -0.0002  & 0.0037  \\
		  & &N&    0.0012  & 0.0021  &$\ul{ 0.0039}$  &$\ul{ 0.0022}$ & -0.0029  & 0.0021  \\
		  & &G&    -0.0003  &${\bf 0.0013}$  & -0.0014  &${\bf 0.0013}$ &$\ul{ 0.0001}$  &${\bf\ul{ 0.0013}}$  \\
        \cline{2-9}
        & $0.3$&1:N&    0.0031  & 0.0032    & 0.0791  & 0.0095 & 0.1863  & 0.0380    \\
	&	   &U&-$\ul{0.0006}$  &${\bf\ul{ 0.0029}}$  & 0.0764  & 0.0089 & 0.1847  & 0.0374    \\
	&	   &N&    -0.0698  & 0.0067    &$\ul{ 0.0000}$  &${\bf\ul{ 0.0021}}$ & 0.0920  & 0.0105    \\
	&	   &G&    -0.1247  & 0.0167    & -0.0711  & 0.0062 &-$\ul{0.0006}$  &${\bf\ul{ 0.0012}}$  \\
	\cline{2-9}
	&	 $0.5$&1:N&    -0.0008  & ${\bf 0.0024}$    & 0.1302  & 0.0196 & 0.3124  & 0.1006    \\
	&	   &U&$\ul{    0.0001}$  &$\ul{ 0.0029}$  & 0.1284  & 0.0191 & 0.3114  & 0.1000    \\
	&	   &N&    -0.1107  & 0.0139    &$\ul{ 0.0009}$  &${\bf\ul{ 0.0016}}$ & 0.1530  & 0.0252    \\
&		   &G&    -0.2105  & 0.0453    & -0.1204  & 0.0155  &-$\ul{0.0001}$  &${\bf\ul{ 0.0011}}$  \\
	\cline{2-9}		
&		 $0.7$&1:N&    -0.0012  & 0.0018    & 0.1793  & 0.0339 & 0.4344  & 0.1910    \\
&		   &U& $\ul{   0.0001}$  &${\bf\ul{ 0.0017}}$  & 0.1793  & 0.0340 & 0.4365  & 0.1927    \\
&		   &N&    -0.1551  & 0.0251    &$\ul{ 0.0010}$  &${\bf\ul{ 0.0011}}$ & 0.2156  & 0.0479    \\
&		   &G&    -0.2923  & 0.0861    & -0.1691  & 0.0292 &-$\ul{0.0009}$  &${\bf\ul{ 0.0009}}$  \\ \hline 
       \end{tabular}}
\bigskip
	   \caption{$n=2000$: In the MSE columns, the numbers in bold-faced text are the smallest among the evaluated score functions. In both the bias and MSE columns, the underlined numbers are the true values (those from the correctly specified score functions).}
\end{center}
\end{table}

\section{Data Examples}

\subsection{Data Description}

To investigate how the attention of investors affects stock returns, we merge 
the hand-collected \texttt{Daum.net} rank data set and the financial data from \texttt{FnGuide}. 
 We illustrate how the returns of attention-grabbing stocks  
fluctuate around the event dates when investors pay attention to these stocks.  
The variables to be used in the analysis are as follows.
(1)  ``R":  The rank of an individual stock on day $t$; if the rank value is $1$, then the stock is 
the most frequently discussed stock on the \texttt{Daum} stock message board on that day. This is the key variable that 
measures the degree of investor attention. 
(2) ``RN": Raw returns on day $t+1$ (the next day) (\%), which is of
primary interest and is the quantity that we wish to predict.
(3)`` R0": Raw returns on day $t$ (\%).  
(4) ``R1": Raw returns on day $t-1$ (\%). 
(5) ``R2": Raw returns on day $t-2$ (\%). 
(6) ``R3": Raw returns on day $t-3$ (\%). 
(7) ``R4": Raw returns on day $t-4$ (\%). 
(8) ``R5": Raw returns on day $t-5$ (\%).
(9) ``ME": Market capitalization (1 trillion Korean won). 
(10) ``T": Turnover ratio defined as the trading volume divided by the number of outstanding shares. 
(11) ``TA": Turnover ratio defined as the trading volume divided by market capitalization.

\subsection{Attention and Predictive Stock Returns} 

As stated previously, the primary goal of our analysis is to determine how the returns of attention-grabbing stocks 
will fluctuate around the event dates when investors pay attention to these stocks. 
The next-day return can also be influenced by several other factors 
in addition to investor attention. To account for the effects of these other factors, we consider 
the residuals obtained after regressing the next-day return against all other covariates except the rank, ``R". 
These residuals are obtained from the multiple linear regression model, which is defined as follows:
\begin{equation} \label{eqn:reg}
{\rm RN}_i= \beta_0 + \sum_{l=0}^5 \beta_{l+1} {\rm R}l_i + \beta_7 {\rm ME}_i  + \beta_8 {\rm T}_i 
+\beta_9 {\rm TA}_i  + \epsilon_i, \quad i=1,2,\ldots,n,  
\end{equation}
where $n (=1,771)$ is the total number of companies on the market. Let $Y_{i}^t$ be the absolute 
(value of the) residual of company $i$ obtained from the regression (\ref{eqn:reg}).
We then select the absolute residuals 
whose ranks are reported to be within the top 30 for the primary analysis. Below, 
$Y_{[r:n]}^t$ is the absolute residual
 corresponding to rank $r$ on day $t$ for $r=1,2\ldots,30$ and $t=1,2,\ldots,T (=537)$.

In Figure 1, we plot the quantiles of $\big\{Y_{[r:n]}^t, t=1,2,\ldots,T\big\}$ for 
each $r=1,2,\ldots,30$. This figure reveals 
that $Y_{[r:n]}^t$ is not increasing at $r=1$ and $2$, which we hypothesize reflects 
the heterogeneity of investor expectations with regard to highly 
attention-grabbing stocks. In other words, 
the ranking of the \texttt{Daum} board is purely determined by the attention 
of individual investors, and stocks related to news, 
that is difficult to characterize as either good or bad, often receive
 the greatest attention and the highest ranks. 
We introduce an additional term to explain 
this apparent local non-monotonicity, and 
consider the model
\begin{equation}
\label{eqn:model-realdata}
Y_{[r:n]}^t = \mu_Y^t + \rho^t \sigma_Y^t  \alpha_{(r:n)} + \gamma^t {\rm I}(r \leq 2) + \eta_{[r:n]}^t, ~~ r = 1, 2, \ldots, 30,
\end{equation}
for $t= 1,2, \ldots, T$ with $T=537$ and $n=1,771$.

\begin{figure}[htb!]
 \centering
\begin{minipage}{0.6\textwidth}
  \includegraphics[width=1.0\linewidth]{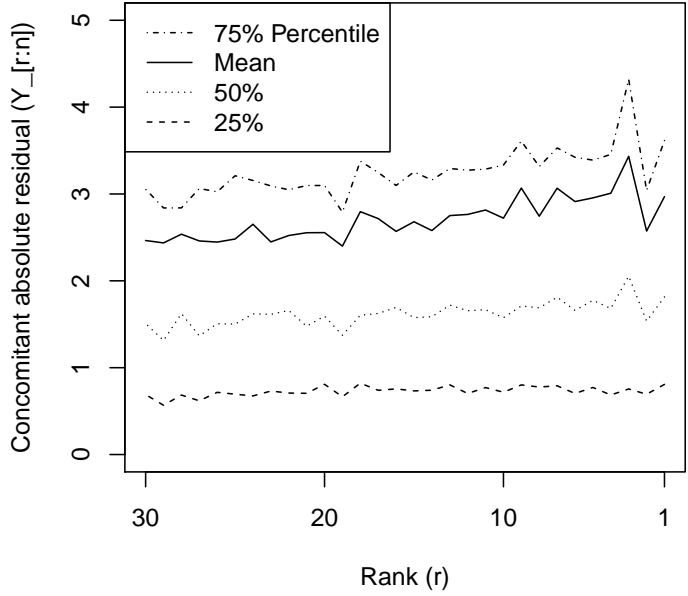}
\end{minipage}%
\caption{Plot of the means and quantiles of $\big\{Y_{[r:n]}^t, t=1,2,\ldots,T\big\}$ for 
each $r=1,2,\ldots,30$.}
\end{figure}

\subsection{Regression with Ranks} 

In the regression model, we consider the scores from the standardized 
distributions of the location-scale families generated by the following three 
distributions: (i) a uniform distribution on $(0,1)$ (called the uniform score), 
(ii) a positive normal distribution 
$X = |Z|$, $Z \sim N(0,1)$ (called the half-normal score), and (iii) a power-law 
distribution $X$ whose CDF is $F(x) = 1 - x^{-\alpha}$ with
 $\alpha = 2.3$ (called the power-law score). 
The scores are illustrated on different scales in Figure 2. 

\begin{figure}[htb!]
 \centering
\begin{minipage}{0.8\textwidth}
  \includegraphics[width=1.0\linewidth]{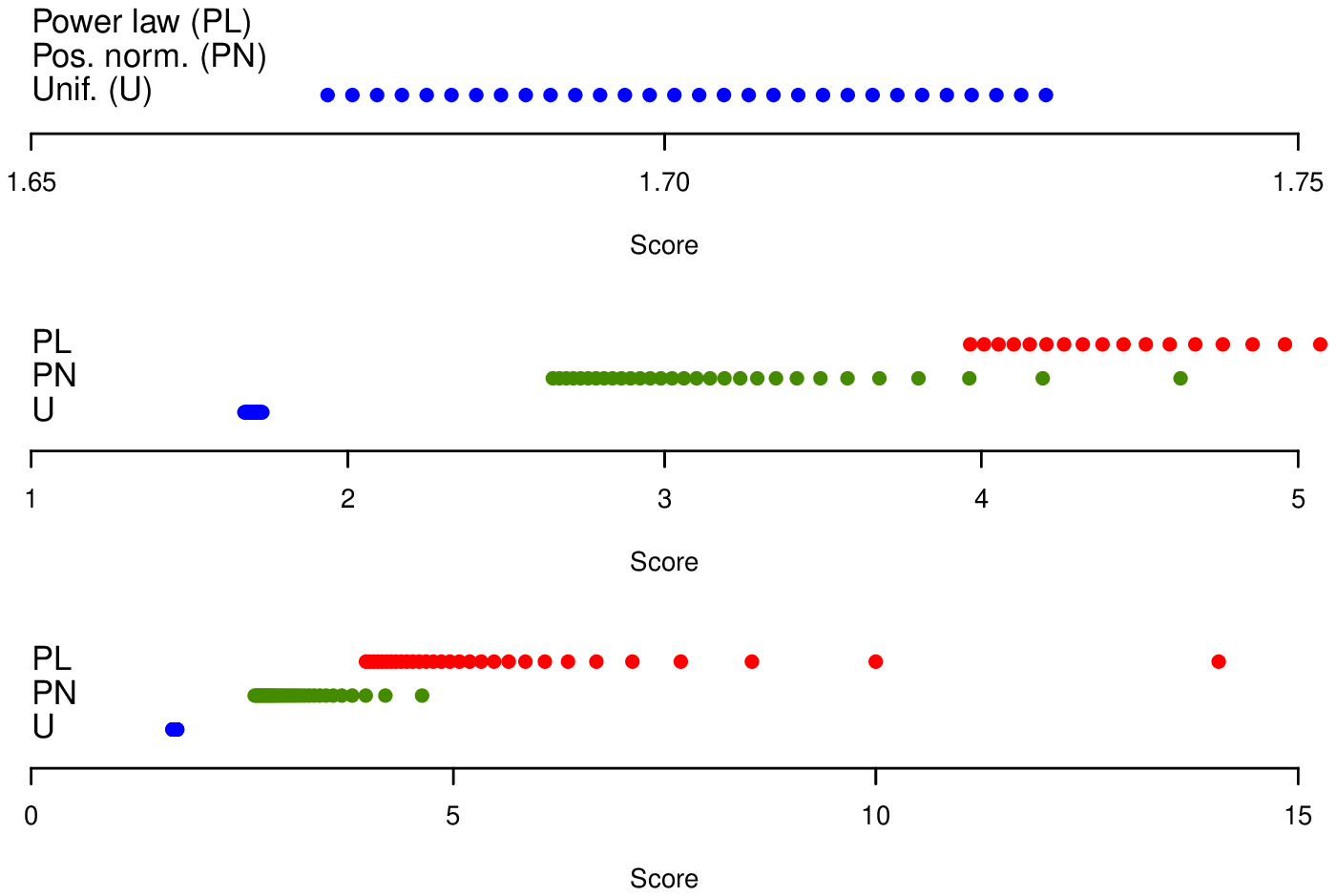}
\end{minipage}
\caption{$\{ \alpha_{(r:n)}\}_{r=1,\ldots,30}$ for each distribution on different scales, where $n = 1,771$.} 
\end{figure}

We estimate $\rho^t$ and $\gamma^t$ to minimize the empirical squared-error loss of the 
model (\ref{eqn:model-realdata}) by iterating the following steps:
\begin{enumerate} 
\item Given the least-squares estimator of $\rho$, denoted by $\widehat{\rho}_{(0)}^t$, update 
the estimate of $\gamma$ as follows: 
\begin{equation} \nonumber 
\widehat{\gamma}^t = \frac{1}{2} \left[
	\left( Y_{[1:n]}^t - \mu_Y^t - \sigma_Y^t \widehat{\rho}_{(0)}^t \alpha_{(1:n)} \right) +
	\left( Y_{[2:n]}^t - \mu_Y^t - \sigma_Y^t \widehat{\rho}_{(0)}^t \alpha_{(2:n)} \right)
	\right].
\end{equation} 
\item Given the estimate of $\gamma$, denoted by $\widehat{\gamma}_{(0)}^t$, update the estimate
of $\rho$ using the LSE proposed in the previous section as follows: 
\begin{equation} \nonumber 
\widehat{\rho}^t = \frac{1}{\sigma_Y^t} \left\{
	\frac{\sum_{r=1}^{30} \alpha_{(r:n)}
\big(Y_{[r:n]}^t - \mu_Y^t - \widehat{\gamma}_{(0)}^t {\rm I} (r \le 2) \big) }
{\sum_{r=1}^{30} \alpha_{(r:n)}^2 } \right\}.
\end{equation} 
\end{enumerate} 
In the analysis, the initial value $\widehat{\rho}_{(0)}^t$ is obtained from the preliminary linear regression on  $\left\{ \big( \alpha_{(r:n)}, (Y_{[r:n]}^t - \mu_Y^t)/\sigma_Y^t \big) \right\}_{r=3, \cdots, 30}$, $t=1,\ldots, T$, in which the data corresponding to $r=1,2$ are excluded.  By contrast, $\mu_Y^t$ and $\sigma_Y^t$ are 
estimated based on their empirical values as follows:
 $\widehat{\mu}_Y^t = \big(\sum_{r=1}^{n} Y_{[r:n]}^t\big) \big/n$ and 
$\big(\widehat{\sigma}_Y^t)^2 = \sum_{r=1}^{n} \big( Y_{[r:n]}^t - \widehat{\mu}_Y^t \big)^2 \big/n$.

To choose the most appropriate score function among the three considered, 
we follow the guidelines presented in Section 3.2
and perform a residual analysis. First, we plot $\alpha_{(r:n)}$ and the quantiles 
of the corresponding residuals to identify any remaining trend not explained by the model 
(see Figure 3). This figure
shows that the uniform score and the half-normal score exhibit additional linear trends 
not explained by the linear model (\ref{eqn:model-realdata}), whereas the 
power-law score performs well. Second, 
we plot  
\begin{equation} \nonumber 
\left(\alpha_{(r:n)}, \frac{Y_{[r:n]}^t - \widehat{\mu}_Y}{\widehat{\sigma}_Y} \right), ~~r=3,4,\ldots,30,
t=1,2,\ldots,T,
\end{equation}
and apply the least-squares fits with/without intercept. As we know from the model 
(\ref{eqn:con-linear}), the estimated regression line with intercept should cross the origin if the scores are correctly 
specified. Figure 4 reveals that the (estimate of) the intercept of the power-law score is closest to zero 
among the intercepts of the three considered scores. 
Finally, the residual sums of squares of the three scores are found to be 
$152186.7$, $150706.3$, and $150288.9$, respectively. This finding also 
supports the superiority of the power-law score function, and in the following analysis, we focus on
the power-law score function.

\begin{figure}[htb!]
 \centering
\begin{minipage}{0.7\textwidth}
  \includegraphics[width=1.0\linewidth]{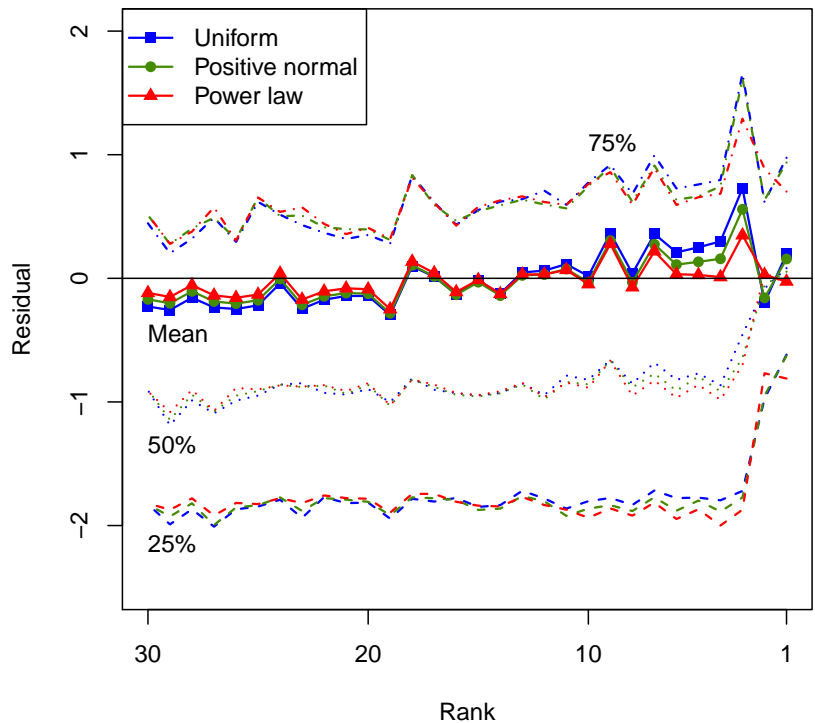}
\end{minipage} %
\caption{The averages and quantiles of the residuals for each rank.}
\end{figure}

\begin{figure}[htb!]
 \centering
\begin{minipage}{0.7\textwidth}
 \includegraphics[width=1.0\linewidth]{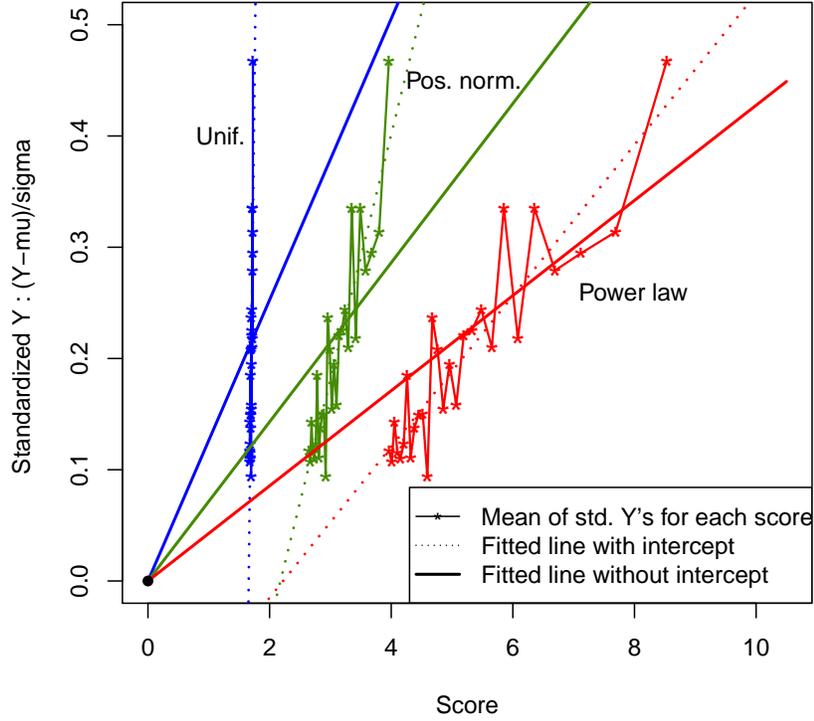}
\end{minipage}
\caption{Check of proportionality between the standardized residuals
and the scores. Points that are marked by `*' represent the average standardized residuals for each score (rank),
the dotted line represents the fitted model for a na\"{\i}ve simple regression with intercept, and the solid line represents our model.
Refer to Sections 3.5 and 5.3 for details.} 
\end{figure}

\subsection{Test of the Effect of Investor Attention on the Next-day Returns}

The primary goal of the analysis is to investigate whether the attention of investors affects the returns 
of a stock on the following day. Specifically, we are interested in testing $\mathcal{H}_0: \rho=0$ under 
the assumption that $\rho^t=\rho$ for every $t$. To test this hypothesis, we consider a combined statistic
of $\{ \widehat{\rho}^t, t=1,2,\ldots, T \}$, that is, 
\begin{equation} \label{eqn:teststat}
{\bf t}_{\rho} = \frac{1}{\sqrt{T}} \sum_{t=1}^T U_t,
\end{equation}
where $U_t= \sqrt{n} \widehat{\rho}^t$. Here, the estimates of $\rho$ for each day $t$, denoted
 by $\widehat{\rho}^t$, are serially dependent on each other, as are the $U_t$s. Thus, to obtain 
the reference distribution of ${\bf t}_{\rho}$, we further assume that
 $\{U_t, ~t=1,2,\ldots,T\}$ is stationary and that ${\rm E} | U_t |^{2+\kappa} < \infty$ for $\kappa>0$. 
Under these assumptions, 
the null distribution of ${\bf t}_{\rho}$ is asymptotically normal with mean $0$ and variance 
\begin{equation}\nonumber
\lim_{T \rightarrow \infty} \frac{1}{T} \sum_{k=0}^T \big( T-k\big) {\rm cov} (U_t, U_{t+k} ).
\end{equation}
The variance can be empirically estimated from the observed values of $\{U_t,~t=1,2,\ldots,T \}$ as
 \begin{equation}\nonumber
\frac{1}{T} \sum_{k=0}^m \big( T-k\big) \widehat{\rm cov} (U_t, U_{t+k} )
\end{equation}
for sufficiently large $m$, where $\widehat{\rm cov} (U_t, U_{t+k} )$ denotes
the empirical covariance of the observed statistics $(U_1, U_{1+k}), (U_2, U_{2+k}), \ldots, (U_{T-k+1} ,U_T)$.  An additional interesting feature of the combined procedure is that the test statistic ${\bf t}_{\rho}$ is a rough estimator of $\rho$ for all $T$ trading days (after 
the scaling). It is calculated as  
\begin{eqnarray}
\frac{1}{\sqrt{T}} \sum_{t=1}^T U_t &=& \frac{1}{\sqrt{T}} \sum_{t=1}^T \sqrt{n} \widehat{\rho}^t = 
\sqrt{n T} \left( \frac{1}{T} \sum_{t=1}^T \widehat{\rho}^t \right) \nonumber\\
&=& \sqrt{n T} \frac{1}{T}
\sum_{t=1}^T \frac{  \sum_{r=1}^{30} \alpha_{(r:n)} \big\{ Y_{[r:n]}^t - \widehat{\mu}_Y^t  - \widehat{\gamma}^t {\rm I} (r \le 2) \big\}} 
{\hat{\sigma}_Y^t \sum_{r=1}^{30} \alpha_{(r:n)}^2} \nonumber \\
&\approx& \sqrt{n T} \frac{1}{\widehat{\sigma}_Y}
\frac{ \sum_{t=1}^T \sum_{r=1}^{30} \alpha_{(r:n)} \big\{ Y_{[r:n]}^t - \widehat{\mu}_Y- \widehat{\gamma}^t {\rm I} (r \le 2)  \big\}} 
{T \sum_{r=1}^{30} \alpha_{(r:n)}^2} \approx \sqrt{ n T}  \widehat{\rho}^{\rm lse},  \label{eqn:equiv}
\end{eqnarray}
where $\widehat{\rho}^{\rm lse}$ is the least-squares estimator under the assumption that $\rho^t=\rho$
for all $t$. The difference between the right- and left-hand sides of (\ref{eqn:equiv}) lies in the definition of
 $\widehat{\gamma}^t$, which is defined using $\widehat{\rho}^t$ rather than 
$\widehat{\rho}^{\rm lse}$.

The results of the test indicate that the average value of $\widehat{\rho}^t$, which is an estimator 
of $\rho$, is $0.043$.
The $p$-value obtained when testing $\mathcal{H}_0:
\rho=0$ is less than $10^{-5}$ and statistically supports the association between investor
attention and the next-day returns of the stocks.

\section{Conclusion}

In this paper, we study a regression problem based on a partially observed rank covariate. We propose
a new set of score functions and study their application in simple linear regression. We demonstrate 
that the least-squares estimator that is calculated based on the newly proposed score consistently estimates the 
correlation coefficient between the response and the unobserved true covariate if the score 
function is correctly specified. We also define procedures based on the obtained residuals to identify 
the correct score function for the given data. The proposed estimator and procedures are 
applied to rank data collected from \texttt{Daum.net}, and we empirically verify the association between
investor attention and next-day stock returns.

We finally conclude the paper with two discussions on the proposed score function. First, 
the application of 
the proposed score function is not restricted to linear regression but may also be appropriate for other statistical 
procedures based on rank, including the well-known rank aggregation problem 
\citep{Breitling:2004,Eisinga:2013}. Second, the score function still can be used for 
the  the multiple linear regression model
\begin{equation} \nonumber
Y_i = X_i \beta + {\bf Z}_i^{\rm T} \eta+ \epsilon_{i}
\end{equation}
with an additional covariate vector ${\bf Z}=\big(Z_1,Z_2,\ldots,Z_q\big)^{\rm T}$. Similarly to the case of 
the simple linear regression, we have the representations
\begin{equation} \nonumber
Y_{[r:n]} = \mu_Y + \frac{X_{(r:n)}-\mu_X}{\sigma_X}  \delta+ \big(  {\bf Z}_{[r:n]} - \mu_{\mZ} \big)^{\rm T} \eta + \epsilon_{[r:n]},
\end{equation}
where  $\delta=\beta \sigma_X$ and $\epsilon_{[r:n]}$, $r=1,2,\ldots,[ns]$, have mean $0$ and independent to each other.
Again, the least-squares estimators of  
$\delta$ and $\eta$ are defined as the solutions to  
\begin{eqnarray}
&& \begin{pmatrix}
\begin{array}{ll}
\sum_{r=1}^{[ns]} \alpha_{(r:n)}^2 & \sum_{r=1}^{[ns]} \alpha_{(r:n)} \big( \mZ_{[r:n]}-\overline{\mZ} \big) \\
  \sum_{r=1}^{[ns]}  \big( \mZ_{[r:n]}-\overline{\mZ} \big)^{\rm T} \alpha_{(r:n)}
& \sum_{r=1}^{[ns]} \big( \mZ_{[r:n]}-\overline{\mZ} \big)^{\rm T} \big( \mZ_{[r:n]}-\overline{\mZ} \big)
\end{array}
\end{pmatrix}
\begin{pmatrix}
\begin{array}{l}
\widehat{\delta} \\
\widehat{\eta}
\end{array}
\end{pmatrix} \nonumber\\
&& \hskip5cm
=
\begin{pmatrix}
\begin{array}{l}
\sum_{r=1}^{[ns]} \alpha_{(r:n)} \big( Y_{[r:n]} - \overline{Y} \big) \\
\sum_{r=1}^{[ns]}  \big( \mZ_{[r:n]}-\overline{\mZ} \big)^{\rm T} \big( Y_{[r:n]} - \overline{Y} \big).
\end{array}
\end{pmatrix},  \nonumber
\end{eqnarray}
and conjecture that they consistently estimate $\delta$ and $\eta$.

\section*{Appendix}

\subsection*{A.1 Proof of Theorem 2}

Note that $\widehat{\sigma}_Y^2/\sigma_Y^2$ converges in probability to 1 as $n \rightarrow \infty$ and 
\begin{eqnarray}
\sqrt{n} \big( \widehat{\rho}(s) - \rho \big) &=&
\sqrt{n} \left\{ \frac{\sigma_Y}{\widehat{\sigma}_Y}\frac{1}{\sigma_Y}
\frac{\sum_{r=1}^{[ns]}  \alpha_{(r:n)}  \big( Y_{[r:n]} - \widehat{\mu}_Y \big)}
{\sum_{r=1}^{[ns]} \alpha_{(r:n)}^2} - \rho \right\}  \nonumber\
\end{eqnarray} 
has the same limiting distribution with 
\begin{eqnarray} 
&& \sqrt{n} \left\{ \frac{1}{\sigma_Y}
\frac{\sum_{r=1}^{[ns]}  \alpha_{(r:n)}  \big( Y_{[r:n]} - \widehat{\mu}_Y \big)} 
{\sum_{r=1}^{[ns]} \alpha_{(r:n)}^2} - \rho \right\} \nonumber\\
&& \qquad = \sqrt{n} \Bigg\{  \frac{1}{\sigma_Y}   \frac{ \sum_{r=1}^{[ns]}
\alpha_{(r:n)} \big(Y_{[r:n]} - m (X_{(r:n)}) \big)} { \sum_{r=1}^{[ns]} \alpha_{(r:n)}^2 } 
+ \rho \frac{\sum_{r=1}^{[ns]} \alpha_{(r:n)} \left(\left(\frac{X_{(r:n)}-\mu_X}{\sigma_X}\right) - \alpha_{(r:n)} \right)}{\sum_{r=1}^{[ns]} \alpha_{(r:n)}^2} \nonumber \\
&& \qquad \qquad \qquad \qquad \qquad \qquad \qquad \qquad \qquad \qquad \qquad 
+\frac{1}{\sigma_Y}\frac{\sum_{r=1}^{[ns]}\alpha_{(r:n)}(\mu_Y-\widehat{\mu}_Y)}{\sum_{r=1}^{[ns]}\alpha_{(r:n)}^2} \Bigg\}.
\label{eqn:s_stat}
\end{eqnarray}
Then, equation ($\ref{eqn:s_stat}$) can be written as
\begin{equation}\nonumber
\frac{1}{\sigma_Y}\frac{\sqrt{n}\sqrt{n\Psi_n^{\rm I}(1)}}{\sum_{r=1}^{[ns]}\alpha_{(r:n)}^2}{\rm U}(s)+
\rho\sqrt{n}{\rm V}(s)+\frac{\sqrt{n}}{\sigma_Y}{\rm R}(s),
\end{equation}
where 
$\Psi_n^{\rm I}(1)=\sum_{r=1}^n\alpha_{(r:n)}^2\sigma_{(r:n)}^2/n$ and
\begin{eqnarray}
{\rm U}(s) &=& \frac{1}{\sqrt{n \Psi_n^{\rm I}(1)}}\sum_{r=1}^{[ns]} \alpha_{(r:n)} \big( Y_{[r:n]} - m (X_{(r:n)}) \big)
\nonumber\\
&=&
\frac{1}{\sqrt{n \Psi_n^{\rm I}(1) } }\sum_{r=1}^{[ns]}
\left( \frac{  {\rm E}( X_{(r:n)} ) -\mu_X} { \sigma_X}   \right) \Big( Y_{[r:n]} - m (X_{(r:n)}) \Big), \label{eqn:psum}
\end{eqnarray}
with
$m(X_{(r:n)})={\rm E} \big(Y \big| X_{(r:n)} \big)=\mu_Y + \rho \sigma_Y (X_{(r:n)}-\mu_X)/\sigma_X$ and 
\begin{equation} \nonumber 
 {\rm V}(s)=\frac{\sum_{r=1}^{[ns]} \alpha_{(r:n)} \big( (X_{(r:n)}-\mu_X)/\sigma_X - \alpha_{(r:n)} \big)}{\sum_{r=1}^{[ns]} \alpha_{(r:n)}^2},~~{\rm R}(s)=\frac{\sum_{r=1}^{[ns]}\alpha_{(r:n)}(\mu_Y-\widehat{\mu}_Y)}{\sum_{r=1}^{[ns]}\alpha_{(r:n)}^2}.
\end{equation}
Since ${\rm R}(s)$ converges in probability to 0, we only consider the ${\rm U}(s)$ and ${\rm V}(s)$.
Thus, the proof of the theorem is based on the functional
central limit theorem for two partial sums of rank statistics, ${\rm U}(s)$ and ${\rm V}(s)$.

We first consider
the asymptotic distribution of the process of taking the weighted partial sum of the induced rank
statistic, which is
\begin{eqnarray}
{\rm U}(s) &=& \frac{1}{\sqrt{n \Psi_n^{\rm I}(1)}}\sum_{r=1}^{[ns]} \alpha_{(r:n)} \big( Y_{[r:n]} - m (X_{(r:n)}) \big)
\nonumber\\
&=&
\frac{1}{\sqrt{n \Psi_n^{\rm I}(1) } }\sum_{r=1}^{[ns]}
\left( \frac{  {\rm E}( X_{(r:n)} ) -\mu_X} { \sigma_X}   \right) \Big( Y_{[r:n]} - m (X_{(r:n)}) \Big). \label{eqn:psum}
\end{eqnarray}
The main finding of \citet{Bhattacharya:1974} is the
conditional independence of $Y_{[1:n]},\ldots,Y_{[n:n]}$ given
$X_1,X_2,\ldots,X_n$ (or equivalently, $X_{(1:n)}, X_{(2:n)}, \ldots, X_{(n:n)}$). \\Thus, given $\mathcal{A}=
\sigma\big( X_1,X_2,\ldots,X_n, \ldots \big)$, (\ref{eqn:psum}) can be read as
\begin{equation} \label{eqn:psum2}
S_{nk}=\frac{1}{\sqrt{n \Psi_n^{\rm I}(1)}}\sum_{r=1}^{k} \alpha_{(r:n)} \sigma_{(r:n)} u_r, \quad k=1,2,\ldots,n,
\end{equation}
 where the $u_r$ are independent, with mean $0$ and variance $\sigma_{(r:n)}^2$. By applying the basic concept of Skorokhod embedding 
\citep{Shorack:2009}, we
obtain a sequence of stopping times $\tau_{n1},\tau_{n2},\ldots,\tau_{nn}$ such that
\begin{itemize}
\item
these stopping times are conditionally
independent given $\mathcal{A}$,
\item ${\rm E} \big(\tau_{nk} \big| \mathcal{A} \big)= \sum_{r=1}^k \alpha_{(r:n)}^2 \sigma_{(r:n)}^2 \big/ \{ n \Psi_n^{\rm I}(1) \} $,
\item ${\rm var}  \big(\tau_{nk} \big| \mathcal{A} \big)= \sum_{r=1}^k \alpha_{(r:n)}^4
{\rm E} \big\{ \big( Y_{[r:n]} - m(X_{(r:n)}) \big)^4  \big| \mathcal{A} \big\} \big/ \{ n \Psi_n^{\rm I}(1) \} ^2  < \infty$, and
\item $\big(S_{n1}, S_{n2},\ldots,S_{nn} \big)$ has the same distribution as
$\big(W(\tau_{n1}), W(\tau_{n1}+\tau_{n2}),\ldots,W (\tau_{n1}+\tau_{n2}+\cdots+\tau_{nn}) \big)$, where $\big\{W(s), ~s \in [0, \infty) \big\}$ is
conventional Brownian motion.
\end{itemize}
We now consider the embedded partial-sum process $\big\{W_n(s): 0 \le s \le 1 \big\}$ that is defined by $W_n(s) =S_{n[ns]}$.
As in \citet{Bhattacharya:1974}, it suffices to show that
\begin{equation} \label{eqn:donsker1}
\sup_{0 \le s \le 1} \left|\frac{1}{n}\sum_{r=1}^{[ns]} \tau_{nr}  - \frac{\Psi_n^{\rm I}(s)}{\Psi_n^{\rm I}(1)} \right|
\end{equation}
converges to $0$ probability.

For each $s \in [0,1]$, the strong law of large numbers states that $\big(1 \big/ n \big)\sum_{r=1}^{[ns]} \tau_{nr} $
almost certainly converges to $\Psi_{\infty}^{\rm I}(s)\big/\Psi_{\infty}^{\rm I}(1)$.  Both
$\big(1 \big/n \big) \sum_{r=1}^{[ns]} \tau_{nr}$ and $\Psi_n^{\rm I}(s)\big/\Psi_n^{\rm I}(1)$ are increasing functions
of $s$. Thus, using the same arguments \citep[pp. 62]{Shorack:2009}, we find
that their sup difference also converges to $0$.

Second, 
\begin{eqnarray} 
\sqrt{n} {\rm V}(s) &=& \frac{1}{\Phi_n(s)}  \frac{1}{\sqrt{n}} \left\{\sum_{r=1}^{[ns]} \alpha_{(r:n)} \bigg(  \frac{X_{(r:n)}-\mu_X}{\sigma_X} - \alpha_{(r:n)} \bigg) \right\}
\end{eqnarray}
is a linear statistic of order statistics and converges to the normal distribution with mean $0$ 
and variance $\Psi_{\infty}^{\rm II}(s) \big/ \Phi_{\infty}^2(s)$ \citep[Theorem 11.4]{David:2003}. Here, we remark that both $\Psi_{\infty}^{\rm II}(s)$ and $\Phi_{\infty}(s)$ can also be written as functionals of the 
distribution of $X$, as shown in \citep{David:2003}.

Finally, summing the asymptotic results of ${\rm U}_n(s)$ and ${\rm V}_n(s)$, we find that 
$
\sqrt{n} \big( \widehat{\rho}(s) - \rho \big)$ converges to the normal distribution with mean $0$
and variance 
\begin{equation}\nonumber
\frac{\Psi_{\infty}^{\rm I}(s)/\sigma_Y^2+\rho^2\Psi_{\infty}^{\rm II}(s)}{\Phi_{\infty}^2(s)}
\end{equation}
This concludes the proof.

\subsection*{A.2 Proof of Theorem 1}

We first decompose the sample correlation between $\overline{S}_n(r)$ and $Y_{[r:n]}$ as ${\rm A}+{\rm B}+{\rm C}$:
\begin{eqnarray}
{\rm A} &=& \frac{1}{n}  \sum_{r=1}^n \overline{S}_n(r) \big\{ Y_{[r:n]} - m(X_{(r:n)})\big\}, \nonumber\\
{\rm B}&=& \frac{1}{n}   \sum_{r=1}^n \overline{S}_n(r) \big\{m(X_{(r:n)}) - \mu_Y \big\}, \nonumber\\
{\rm C} &=&  \frac{1}{n}  \sum_{r=1}^n \overline{S}_n(r) \big\{ \mu_Y - \overline{Y}_n \big\}, \nonumber
\end{eqnarray}
where $m(X_{(r:n)})=\mu_Y+\rho \sigma_Y X_{(r:n)}$. In below, we compute the limit of each ${\rm A}, {\rm B}$, 
and ${\rm C}$. First, similarly to the convergence of ${\rm U}(s)$ (with $s=1$) in Appendix A, we 
can show that $\sqrt{n} {\rm A}$ converges in distribution to a normal random variable and, thus, 
${\rm A}$ converges to $0$ in probability. Second, similarly to the convergence of ${\rm R}(s)$ (with $s=1$) 
in Appendix A, we can show that $\sqrt{n} {\rm C}$ converges in distribution to a normal random variable and, thus, 
${\rm C}$ converges to $0$ in probability. Lastly, 
\begin{eqnarray}
 {\rm B} &=& \frac{1}{n}   \sum_{r=1}^n \overline{S}_n(r) \big\{m(X_{(r:n)}) - \mu_Y \big\} \nonumber \\
 &=&  \rho \sigma_Y \cdot \frac{1}{n}   \sum_{r=1}^n \overline{S}_n(r) \frac{X_{(r:n)} - \mu_X}{\sigma_X} \nonumber \\
 &=&   \rho \sigma_Y \cdot \frac{1}{n}   \sum_{r=1}^n \overline{S}_n(r) \left\{ \frac{X_{(r:n)} - \mu_X}{\sigma_X}  - \alpha_{(r:n)} \right\} +   \rho \sigma_Y \cdot \frac{1}{n}   \sum_{r=1}^n \overline{S}_n(r) \alpha_{(r:n)}, \nonumber
\end{eqnarray}
 whose first term converges to $0$ in probability similarly to the convergence of ${\rm V}(s)$ (with $s=1$) in Appendix A. 
 Hence, ${\rm B}$ converges in probability to the limit of 
 \begin{equation}\label{eqn:inner}
  \rho \sigma_Y \cdot \frac{1}{n}   \sum_{r=1}^n \overline{S}_n(r) \alpha_{(r:n)}.
 \end{equation} 
 Since $\sum_{r=1}^n \overline{S}^2_n(r)=1$ and $\sum_{r=1}^n \alpha^2_{(r:n)}$ approaches $1$, (\ref{eqn:inner}) 
 is maximized when $\overline{S}_n(r)=\alpha_{(r:n)}$ in asymptotic.


\begin{thebibliography}{99}

  
\bibitem[{Bhattacharya(1974)}]{Bhattacharya:1974} Bhattacharya, P. K. (1974).
\newblock  Convergence of sample paths of normalized sums of induced order statistics.
\emph{The Annals of Statistics},
{\bf 2}, 1034--1039.

\bibitem[Breitling et al.(2004)]{Breitling:2004}  Breitling, R., Armengaud, P., Amtmann, A. and 
Herzyk, P. (2004). 
\newblock
Rank products: a simple, yet powerful, new method to detect differentially regulated genes in 
replicated microarray experiments. 
\emph{FEBS Letters}, {\bf 573}, 83-92.

\bibitem[Camerer(2003)]{Camerer:2003} Camerer, C. (2003).
\newblock  
The behavioral challenge to economics: Understanding normal people.
\newblock Conference Series, Proceedings, 48. Paper presented at Federal Bank of Boston 48th Conference on 'How humans behave: Implications for economics and policy'.

\bibitem[Costantini et al.(2010)]{Costantini:2010} Costantini, P., Linting, M., and Porzio, G.C. (2010). 
\newblock 
Mining performance data through nonlinear PCA with optimal scaling.
\newblock
\emph{Applied Stochastic Models in Business and Industry}, {\bf 26(1)}, 85-101. 



\bibitem[David and Galambos(1974)]{David:1974} David, H.A. and Galambos, J. (1974). 
\newblock The asymptotic theory of concomitants of order statistics. \emph{Journal of Applied Probability}, {\bf 11}, 762-770.

\bibitem[David(2003)]{David:2003} David, H.A. (2003). 
\newblock \emph{Order Statistics}. John Wiley and Sons, New Jersey.

\bibitem[de Leeuw and Mair(2009)]{deLeeuw:2009}
de Leeuw, J.  and Mair, P. G (2009).
\newblock 
Gifi methods for opti- mal scaling in R: The package homals. 
\newblock
\emph{Journal of Statistical Software}, {\bf 31(4)}, 2009.



\bibitem[Eisinga et al.(2013)]{Eisinga:2013}
 Eisinga, R., Breitling, R. and Heskes, T. (2013).
\newblock
The exact probability distribution of the rank product statistics for replicated experiments. 
\emph{FEBS Letters}, {\bf 587}, 677-682.


\bibitem[Gertheiss(2014)]{Gertheiss:2014}
Gertheiss, J. (2014).
\newblock {ANOVA for factors with ordered levels}.
\newblock {\em Journal of Agricultural, Biological, and Environmental
  Statistics}, {\bf 19}, 258--277.


\bibitem[Graubard and Korn(1987)]{Graubard:1987} Graubard, B.I. and Korn, E.L. (1987). 
\newblock Choice of column scores for testing independence in ordered 2K contingency tables. 
\newblock{\em Biometrics}, {\bf 43}, 471-476.


\bibitem[Ivanova and Berger(2001)]{Ivanova:2001} Ivanova, A. and Berger, V.W. (2001). 
\newblock 
Drawbacks to integer scoring for ordered categorical data. 
\newblock {\em Biometrics},  {\bf 57}, 567-570.


\bibitem[H\'ajek(1968)]{Hajek:1968} H\'ajek, J. (1968). 
\newblock Asymptotic normality of simple linear
rank statistics under alternatives.
\newblock{\em Annals of Mathematical Statistics,} {\bf 39}, 325-346.

\bibitem[Hora and Conover(1984)]{Hora:1984}
Hora, S.C. and Conover, W.J. (1984).
\newblock {The F-statistic in the two-way layout with rank-score transformed
  data}.
\newblock {\em Journal of the American Statistical Association},
  {\bf 79}, 668-673.







\bibitem[Jacoby(2016)]{Jacoby:2016} Jacoby, W.G. (2016) opscale: A Function for Optimal Scaling.

\texttt{http://polisci.msu.edu/jacoby/icpsr/scaling/computing/alsos/acoby,\%20opscale\%20MS.pdf} (August 31, 2016).


\bibitem[Kahneman(1973)]{Kahneman:1973} Kahneman, D. (1973).
\newblock \emph{Attention and Effort}. Prentice-Hall, New Jersey.


\bibitem[Kimeldorf et al.(1992)]{Kimeldorf:1992}
Kimeldorf, G., Sampson, A.R. and Whitaker, L.R. (1992).
\newblock {Min and max scorings for two-sample ordinal data}.
\newblock {\em Journal of the American Statistical Association},
  {\bf 87}, 241-247.



\bibitem[Linting et al.(2007)]{Linting:2007} Linting, M., Meulman, J.J., Groenen, P.J.F., and Van der Kooji, A.J. (2007).
\newblock
Nonlinear principal components analysis: Introduction and application.
\newblock
\emph{Psychological Methods}, {\bf 12(3)}, 336-358. 

\bibitem[Mair, and de Leeuw(2010)]{Mair:2010}
Mair, P. and de Leeuw, J. (2010).
\newblock 
A general framework for multivariate analysis with optimal scaling: The R package aspect. 
\newblock \emph{Journal of Statistical Software}, {\bf 32(9)}, 2010.





\bibitem[Senn(2007)]{Senn:2007}
Senn, S. (2007). 
\newblock
Drawbacks to noninteger scoring for ordered categorical data.
\newblock{\em Biometrics}, {\bf 63}, 296-298.


\bibitem[Shorack and Wellner(2009)]{Shorack:2009} Shorack, G.R. and Wellner, J.A. (2009).
\newblock
\emph{Empirical Processes with Applications to Statistics}, The Society for Industrial and Applied 
Mathematics, Philadelphia, PA. 






\bibitem[Zheng, 2008]{Zheng:2008}
Zheng, G. (2008).
\newblock {Analysis of ordered categorical data: two score-independent
  approaches.}
\newblock {\em Biometrics}, {\bf 64}, 1276-1279.





\end{thebibliography}
\end{document}